\documentclass[journal]{IEEEtran}

\usepackage{amsmath,amssymb,amsfonts}
\usepackage{graphicx}
\graphicspath{{figs/}}
\usepackage{hyperref}
\usepackage{cite}
\usepackage{algorithm}
\usepackage{algorithmic}
\usepackage{booktabs}
\usepackage{multirow}
\usepackage{subfig}
\usepackage{color}
\usepackage{url}
\usepackage{bm}
\usepackage{array}
\usepackage{textcomp}

\newtheorem{proposition}{Proposition}
\newtheorem{corollary}{Corollary}
\newtheorem{lemma}{Lemma}

\newtheorem{remarkinner}{Remark}
\newenvironment{remark}[1][]{\begin{remarkinner}[#1]\upshape}{\end{remarkinner}}

\newcommand{\vect}[1]{\mathbf{#1}}
\newcommand{\matr}[1]{\mathbf{#1}}
\newcommand{\E}{\mathbb{E}}
\newcommand{\dB}{\,\mathrm{dB}}
\newcommand{\mmh}{\,\mathrm{mm/h}}
\newcommand{\bcrb}{\mathrm{BCRB}}
\newcommand{\crb}{\mathrm{CRB}}

\begin{document}

\title{Rain Rate Estimation Bounds and Weather-Adaptive Pilot Allocation for LEO Satellite ISAC
}

\author{Haofan Dong,~\IEEEmembership{Student Member,~IEEE}, Houtianfu Wang,~\IEEEmembership{Student Member,~IEEE,}
        Hanlin Cai,~\IEEEmembership{Student Member,~IEEE,}
        O. Tansel Baydas ,~\IEEEmembership{Student Member,~IEEE,}
        and Ozgur B. Akan,~\IEEEmembership{Fellow,~IEEE}
\thanks{The authors are with Internet of Everything Group, Department of Engineering, University of Cambridge, CB3 0FA Cambridge, UK (email:hd489@cam.ac.uk, hc663@cam.ac.uk, hw680@cam.ac.uk, otb26@cam.ac.uk, oba21@cam.ac.uk).}
\thanks{Ozgur B. Akan is also with the Center for neXt-generation Communications
(CXC), Department of Electrical and Electronics Engineering, Koç University, 34450 Istanbul, Turkey }}

\maketitle

\begin{abstract}
Rain attenuates Ku-band satellite signals by up to 20~dB, encoding 
precipitation information along the Earth--space slant path. This paper 
derives the Bayesian Cram\'{e}r--Rao bound (BCRB) for rain rate estimation 
from LEO broadband OFDM downlinks. Using corrected ITU-R P.838-3 coefficients, 
the standard CRB yields a minimum detectable rain rate $R_{\min} \approx 
4.3\mmh$ for a single link at the $38^\circ$ reference elevation. We derive 
the prior Fisher information in closed form for log-normal rain 
($c_v = 1.05$, from 186{,}292 samples) and show that a single-snapshot BCRB 
reduces $R_{\min}$ to $1.1\mmh$; exploiting temporal correlation 
($\rho = 0.95$) over a 30-min window further tightens it to $0.95\mmh$, 
while multi-link fusion across $N = 215$ links lowers the operating-point 
RMSE \emph{lower bound} at $R = 20\mmh$ to approximately $0.07\mmh$. 
Building on these bounds, we formulate a weather-adaptive pilot allocation 
that minimizes the BCRB subject to a hard spectral-efficiency constraint, 
characterize its three-regime structure (full-sensing, throughput-tracking, 
outage), and pair it with a CUSUM rain onset detector achieving sub-10-min 
delay for $R \geq 20\mmh$. A closed-form analysis of dynamic LEO slant 
geometry identifies a sensing-optimal elevation at the P.618-validity floor 
of $15^\circ$ that yields a $1.58\times$ geometric improvement over the 
$38^\circ$ baseline, exposing a structural anti-correlation between sensing- 
and communication-optimal elevations along an orbital pass. Validation 
against 9.4~million radar samples from 215 Ku-band GEO satellite links 
($r = 0.72$, RMSE~$= 1.24\dB$) and 113 rain gauges confirms the underlying 
attenuation model; the bounds transfer to LEO constellations under matched 
OFDM signal parameters, with dedicated LEO validation left for future work.
\end{abstract}

\begin{IEEEkeywords}
Integrated sensing and communication (ISAC), Cram\'{e}r--Rao bound, Bayesian estimation, rain rate, satellite microwave link, LEO, Ku-band, OFDM.
\end{IEEEkeywords}

\section{Introduction}
\label{sec:intro}

Low Earth orbit (LEO) broadband constellations such as Starlink, OneWeb, and Kuiper transmit Ku-band (10.7--12.7~GHz) OFDM signals to millions of ground terminals. Rain along the Earth--space slant path attenuates these signals according to the ITU-R P.838 power law $\gamma_R = k(f)\,R^{\alpha(f)}$~[dB/km]~\cite{itu838}, encoding the precipitation field along every downlink path. The P.838 attenuation model depends on frequency and slant-path geometry, not orbital altitude; the estimation bounds derived in this paper therefore apply to both LEO and GEO links. LEO constellations are of particular interest because they offer approximately 20~dB higher received power, rapid slant-path geometry variation, and dense spatial sampling from thousands of simultaneous links.

Repurposing existing satellite infrastructure for environmental monitoring requires no additional hardware, spectrum, or energy. This is relevant where ground-based weather radar coverage is sparse, as in much of Africa, Southeast Asia, and oceanic regions. However, no prior work has established the fundamental estimation-theoretic limits of satellite-based rain sensing, nor exploited temporal rain correlation to improve these limits.

Terrestrial commercial microwave link (CML) rain estimation is established~\cite{overeem2013,messer2006,chwala2019,fencl2024,han2023cml5g}, exploiting P.838 on short horizontal paths. Satellite microwave links (SMLs) were pioneered by Barth\`{e}s and Mallet~\cite{barthes2013} at Ku-band GEO, extended by Gelbart et al.~\cite{gelbart2025} and surveyed by Giannetti and Reggiannini~\cite{giannetti2021}; the OpenSat4Weather dataset~\cite{nebuloni2025} provides 215 GEO links with radar ground truth.

Within ISAC~\cite{liu2022jsac}, Palhares et al.~\cite{palhares2026} demonstrated 28~GHz weather sensing, and Weiss et al.~\cite{weiss2024} framed CML estimation as ISAC. CRB/FIM analysis for ISAC is well developed~\cite{liu2022jsac,xiong2023,xu2024jsac,dong2025debrisense}; Leyva-Mayorga et al.~\cite{leyva2025} recently proposed atmospheric sensing-aided resource allocation for LEO NTN; Wang et al.~\cite{wang2025e2l} developed an environment-to-link ISAC framework for Ka-band LEO downlinks that senses ionospheric disturbances via dual-carrier phase observables; Sagiv and Messer~\cite{sagiv2023} derived the only rain-sensing CRB (terrestrial 2D retrieval).

The Van Trees inequality~\cite{vantrees1968} provides the Bayesian CRB; Gini~\cite{gini1998} and Ghaddar and Yu~\cite{ghaddar2025active} applied it to radar and ISAC, respectively. Bayesian bounds for rain rate and temporal exploitation of rain autocorrelation remain open.

The contributions are:
\begin{enumerate}
\item We formulate the rain-sensing ISAC problem on Ku-band LEO OFDM downlinks and derive the joint sensing--communication CRB under corrected P.838-3 coefficients, establishing a single-link minimum detectable rain rate of $4.3\mmh$ at the $38^\circ$ baseline that the BCRB and multi-link fusion subsequently reduce by more than an order of magnitude (Propositions~\ref{prop:scaling}--\ref{prop:ident}).
\item We derive the prior Fisher information for log-normal rain in closed form and establish the BCRB, showing that the prior significantly tightens the bound at low rain rates (Proposition~\ref{prop:jp}).
\item We extend the BCRB to exploit temporal correlation, demonstrating that a practical observation window captures most of the available temporal gain (Corollary~\ref{cor:tbcrb}).
\item We derive a closed-form expression for the sensing-optimal elevation that minimizes $R_{\min}$ along a LEO pass, showing that sensing and communication optima never coincide and exposing a second structural anti-correlation along the orbital-time axis (Lemma~\ref{lem:sweet}, Remark~\ref{rem:geomanti}).
\item We formulate pilot allocation as a joint sensing--communication optimization that minimizes the BCRB subject to a hard spectral-efficiency floor, characterize the three-regime structure of $\eta^*(R)$ (full-sensing, throughput-tracking, outage), derive high/low-SNR asymptotic forms, and pair the allocation with a CUSUM rain onset detector and a MAP estimator; the resulting allocation dominates fixed-$\eta$ baselines that violate the rate constraint at high rain (Propositions~\ref{prop:mono}--\ref{prop:delay}).
\item We validate with 9.4~million radar samples from 215 Ku-band GEO satellite links ($r = 0.72$, RMSE~$= 1.24\dB$) and 113 gauges, with per-satellite, multi-link, and per-event analysis.
\end{enumerate}

The rest of the paper is organized as follows. Section~\ref{sec:model} presents the system model. Section~\ref{sec:crb} derives the CRB and BCRB. Section~\ref{sec:algorithm} develops the algorithm. Section~\ref{sec:results} validates. Section~\ref{sec:conclusion} concludes.
 Scalars are italic, vectors bold lowercase ($\boldsymbol{\theta}$), matrices bold uppercase ($\matr{J}$); $\E[\cdot]$ denotes expectation; $\ln$ is natural logarithm.

\begin{figure*}[!t]
\centering
\subfloat[LEO broadband ISAC for rain sensing.]{%
  \includegraphics[width=0.48\textwidth]{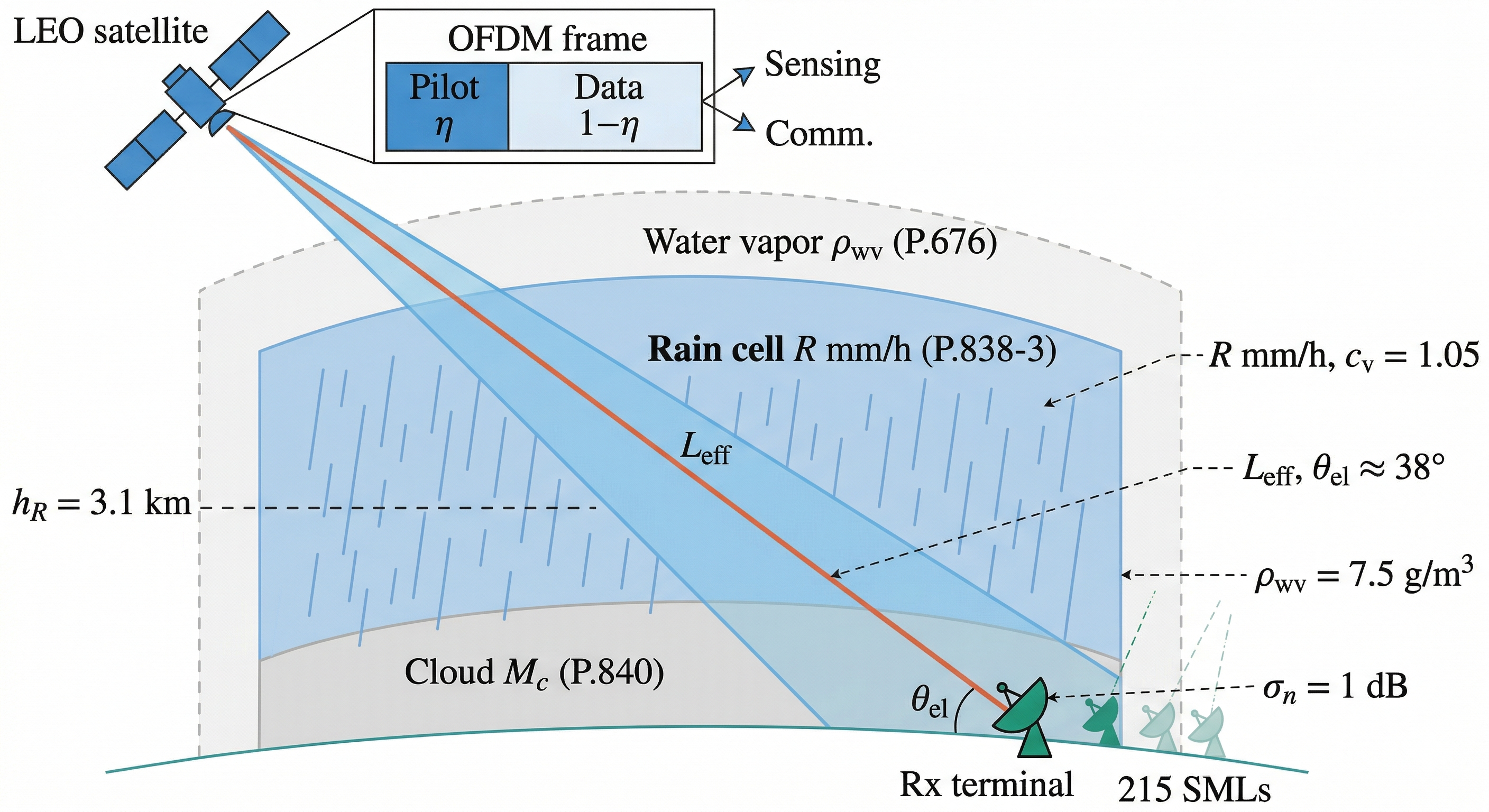}%
  \label{fig:scenario}}
\hfill
\subfloat[Signal processing pipeline.]{%
  \includegraphics[width=0.48\textwidth]{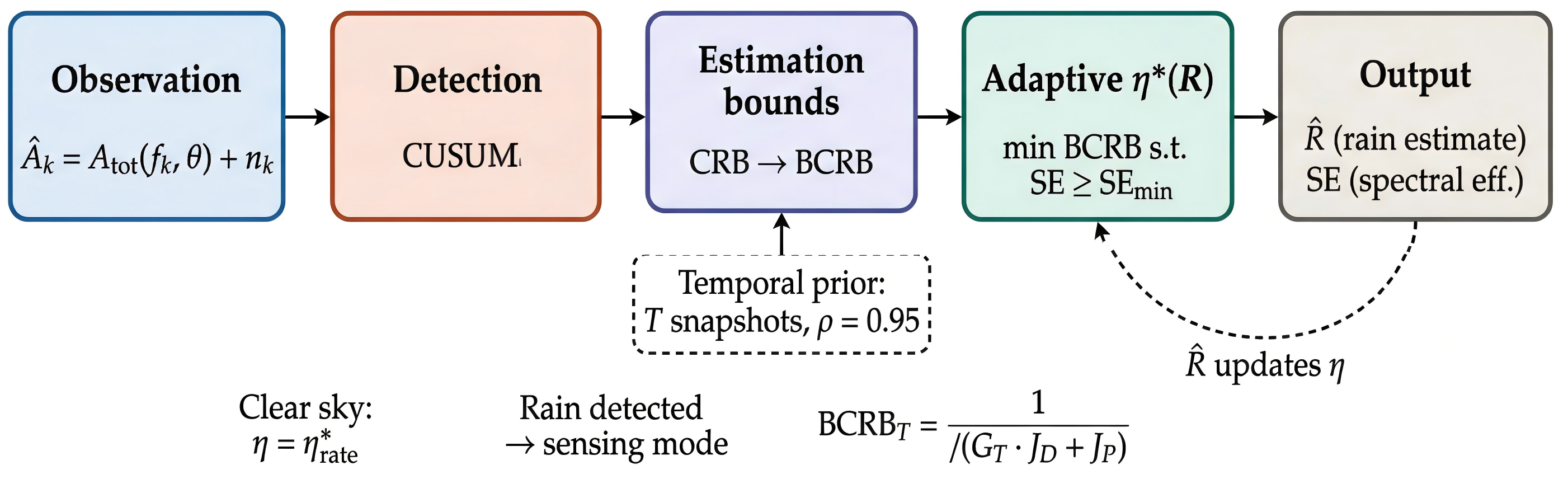}%
  \label{fig:pipeline}}
\caption{System overview. (a)~A LEO satellite transmits Ku-band OFDM; rain attenuates the signal per P.838. (b)~Pipeline: attenuation observed, rain detected via CUSUM, BCRB evaluated, $\eta$ adapted.}
\label{fig:system}
\end{figure*}

\section{System Model}
\label{sec:model}

Fig.~\ref{fig:scenario} illustrates the system. A LEO satellite transmits a Ku-band OFDM downlink to a ground terminal at elevation $\theta_{\mathrm{el}}$. Rain along the slant path attenuates the signal. The OFDM frame of $N_{\mathrm{sym}}$ symbols is split between $N_p = \eta N_{\mathrm{sym}}$ pilot symbols and $(1-\eta)N_{\mathrm{sym}}$ data symbols.

\subsection{OFDM Signal Model}
\label{ssec:signal}

The transmitted signal occupies $K$ subcarriers at frequencies $\{f_k\}_{k=1}^K$ spanning bandwidth $B$ around $f_c$. The received signal at subcarrier $k$, symbol $m$ is
\begin{equation}
y_k[m] = h_k \cdot x_k[m] + w_k[m],
\label{eq:rx}
\end{equation}
where $x_k[m]$ is the transmitted symbol with power $P_s$, $w_k[m] \sim \mathcal{CN}(0, \sigma_w^2)$, and the channel magnitude captures path-integrated attenuation:
\begin{equation}
|h_k|^2 = \frac{G_{\mathrm{tx}} G_{\mathrm{rx}} \lambda_k^2}{(4\pi R_{\mathrm{sat}})^2} \cdot 10^{-A_{\mathrm{tot}}(f_k)/10},
\label{eq:channel}
\end{equation}
with antenna gains $G_{\mathrm{tx}}, G_{\mathrm{rx}}$, slant range $R_{\mathrm{sat}} \approx 900$~km (550~km altitude, $38^\circ$ elevation), and total excess attenuation $A_{\mathrm{tot}}(f_k)$ in~dB. The per-subcarrier SNR is
\begin{equation}
\gamma_k = \frac{P_s |h_k|^2}{\sigma_w^2} = \gamma_0 \cdot 10^{-A_{\mathrm{tot}}(f_k)/10},
\label{eq:snr_k}
\end{equation}
where $\gamma_0 = P_s G_{\mathrm{tx}} G_{\mathrm{rx}} \lambda_c^2 / ((4\pi R_{\mathrm{sat}})^2 \sigma_w^2)$ is the clear-sky SNR ($\approx 10\dB$ for Starlink parameters~\cite{humphreys2023}).

\subsection{Slant-Path Attenuation Model}
\label{ssec:atten}

The total excess attenuation decomposes as
\begin{equation}
A_{\mathrm{tot}}(f, \boldsymbol{\theta}) = \underbrace{k(f) R^{\alpha(f)} L_{\mathrm{eff}}}_{A_{\mathrm{rain}}} + \underbrace{\gamma_g(f, \rho_{\mathrm{wv}}) L_g}_{A_{\mathrm{gas}}} + \underbrace{K_l(f) M_c L_c}_{A_{\mathrm{cloud}}} + G,
\label{eq:atot}
\end{equation}
where $\boldsymbol{\theta} = [R, \rho_{\mathrm{wv}}, M_c, G]^T$ and:
\begin{itemize}
\item $k(f), \alpha(f)$ are P.838-3~\cite{itu838} coefficients (H-pol), with specific attenuation $\gamma_R(f) = k(f) R^{\alpha(f)}$~[dB/km];
\item $\gamma_g(f, \rho_{\mathrm{wv}})$ is P.676~\cite{itu676} gaseous absorption;
\item $K_l(f)$ is P.840~\cite{itu840} cloud attenuation coefficient.
\end{itemize}

The effective rain path length follows the P.618~\cite{itu618} reduction factor:
\begin{equation}
L_{\mathrm{eff}} = L_s \cdot r_{\mathrm{eff}}, \quad r_{\mathrm{eff}} = \frac{1}{1 + 0.78\sqrt{L_G \gamma_R/f} - 0.38(1 - e^{-2L_G})},
\label{eq:leff}
\end{equation}
where $L_s = h_R / \sin\theta_{\mathrm{el}}$ is the geometrical slant path, $L_G = L_s \cos\theta_{\mathrm{el}}$ is the horizontal projection, and $h_R = 3.1$~km (P.839~\cite{itu839}, $44^\circ$N). The effective paths are $L_g \approx 10$~km (water vapor scale height), $L_c \approx 2$~km (cloud layer), and $L_{\mathrm{eff}} \approx 3$~km at $38^\circ$.\footnote{$L_{\mathrm{eff}}$ varies with elevation: $L_{\mathrm{eff}}\approx 7.1$~km at $\theta_{\mathrm{el}}=15^\circ$, $\approx 5.4$~km at $20^\circ$, and $\approx 1.9$~km at $80^\circ$, under the paper's anchoring $L_{\mathrm{eff}}(\theta_{\mathrm{el}}) = L_0\sin\theta_{\mathrm{base}}/\sin\theta_{\mathrm{el}}$ from Section~\ref{ssec:geometry}. The geometry effect on the CRB is analyzed there.}

\subsection{Pilot-Based Attenuation Estimation}
\label{ssec:obs}

At each subcarrier $k$, the pilot-averaged received power over $N_p$ symbols is
\begin{equation}
\hat{P}_k = \frac{1}{N_p} \sum_{m=1}^{N_p} |y_k[m]|^2 = P_s |h_k|^2 + \sigma_w^2 + \epsilon_k,
\label{eq:phat}
\end{equation}
where $\epsilon_k$ is the estimation error with $\mathrm{Var}(\epsilon_k) = (P_s |h_k|^2 + \sigma_w^2)^2/N_p$ from the chi-squared distribution of $|y_k|^2$. Converting to dB via baseline subtraction against the clear-sky reference $\hat{P}_{k,0}$:
\begin{equation}
\hat{A}_k = 10\log_{10}\!\left(\frac{\hat{P}_{k,0}}{\hat{P}_k}\right) = A_{\mathrm{tot}}(f_k, \boldsymbol{\theta}) + n_k,
\label{eq:obs}
\end{equation}
where $n_k$ aggregates measurement noise in the dB domain. For $\gamma_k \gtrsim 5\dB$, propagating the chi-squared variance of $\hat{P}_k$ through the first-order Taylor expansion of $10\log_{10}(\cdot)$ around the mean yields approximately Gaussian $n_k$ with variance
\begin{equation}
\sigma_n^2 \approx \frac{c_0}{N_p}\!\left(1 + \frac{1}{\gamma_k}\right)^{\!2} + \sigma_{\mathrm{sys}}^2,
\label{eq:noise_model}
\end{equation}
where $c_0 \triangleq (10/\ln 10)^2 \approx 18.86$ and $\sigma_{\mathrm{sys}}^2 \approx (0.63\dB)^2$ aggregates LNB gain drift, ADC quantization, pointing jitter, and scintillation~\cite{fencl2024}. The high-SNR floor of the estimation term is $c_0/N_p$, reflecting the irreducible chi-squared sample variance of pilot-power averaging. At $\gamma_k \approx 10\dB$ and $N_p = 30$ ($\eta \approx 0.10$, $N_{\mathrm{sym}} = 302$), the estimation term contributes $\approx 0.87\dB$, giving $\sigma_n \approx 1\dB$ total; wet-antenna effects during rain add $1$--$3$~dB and are absorbed into $\sigma_{n,\mathrm{eff}} \approx 2\dB$.

Stacking $K$ frequency observations yields the vector model:
\begin{equation}
\hat{\vect{a}} = \vect{a}(\boldsymbol{\theta}) + \vect{n}, \quad \vect{n} \sim \mathcal{N}(\vect{0}, \sigma_n^2 \matr{I}_K),
\label{eq:obs_vec}
\end{equation}
where $\vect{a}(\boldsymbol{\theta}) = [A_{\mathrm{tot}}(f_1, \boldsymbol{\theta}), \ldots, A_{\mathrm{tot}}(f_K, \boldsymbol{\theta})]^T$. Independence across subcarriers follows from the AWGN model at each $f_k$: the noise samples $w_k[m]$ are independent by construction, while the deterministic attenuation $A_{\mathrm{tot}}(f_k)$ varies smoothly across the 2~GHz Ku-band per P.838~\cite{itu838}.

\subsection{Rain Rate Prior and Temporal Model}
\label{ssec:prior}

Rain rate during events follows a log-normal distribution~\cite{kedem1987}:
\begin{equation}
p(R) = \frac{1}{R \sigma_{\ln} \sqrt{2\pi}} \exp\!\left(-\frac{(\ln R - \mu_{\ln})^2}{2\sigma_{\ln}^2}\right),
\label{eq:lognormal}
\end{equation}
with $\sigma_{\ln}^2 = \ln(1 + c_v^2)$, $\mu_{\ln} = \ln\bar{R} - \sigma_{\ln}^2/2$. From OpenSat4Weather (186{,}292 samples, $R > 1\mmh$), MLE gives $\bar{R} = 5.2\mmh$, $c_v = 1.05$. The log-normal achieves the lowest Akaike information criterion (AIC) among log-normal, gamma, and Weibull.

At 1-min resolution, log rain rate follows a first-order Gauss--Markov process:
\begin{equation}
\ln R(t) = \rho \ln R(t-1) + (1-\rho)\mu_{\ln} + w(t),
\label{eq:gm}
\end{equation}
with $w(t) \sim \mathcal{N}(0, (1-\rho^2)\sigma_{\ln}^2)$. Three estimates: $\rho = 0.954$ (radar), $0.949$ (RSL), $0.953$ (ACF). We adopt $\rho = 0.95$.

\subsection{Communication Rate Model}
\label{ssec:comm}

Under pilot-aided OFDM with imperfect CSI~\cite{hassibi2003,li2023leo}, the channel estimate MSE is
\begin{equation}
\sigma_{\hat{h}}^2 = \frac{1}{\eta N_{\mathrm{sym}} \bar{\gamma}},
\label{eq:csi_mse}
\end{equation}
yielding spectral efficiency
\begin{equation}
C(\eta) = (1-\eta) \log_2\!\left(1 + \frac{\bar{\gamma}^2 \eta N_{\mathrm{sym}}}{1 + \bar{\gamma} \eta N_{\mathrm{sym}}}\right),
\label{eq:se}
\end{equation}
where $\bar{\gamma} = \gamma_0 / 10^{A_{\mathrm{tot}}/10}$ and $\gamma_0 \approx 10\dB$~\cite{humphreys2023}. The throughput-optimal clear-sky pilot fraction is $\eta^*_{\mathrm{rate}} = (\sqrt{1 + \gamma_0 N_{\mathrm{sym}}} - 1)/(\gamma_0 N_{\mathrm{sym}}) \approx 0.018$.

\section{Cram\'{e}r--Rao Bound Analysis}
\label{sec:crb}

\subsection{Fisher Information Matrix}
\label{ssec:fim}

Under the Gaussian model~\eqref{eq:obs_vec}, the log-likelihood is
\begin{equation}
\ell(\boldsymbol{\theta}) = -\frac{1}{2\sigma_n^2} \|\hat{\vect{a}} - \vect{a}(\boldsymbol{\theta})\|^2 + \text{const.}
\label{eq:loglik}
\end{equation}
The Fisher information matrix (FIM) follows from $[\matr{J}]_{ij} = -\E[\partial^2 \ell/\partial\theta_i\partial\theta_j]$:
\begin{equation}
\matr{J}(\boldsymbol{\theta}) = \frac{1}{\sigma_n^2} \matr{G}^T\matr{G},
\label{eq:fim}
\end{equation}
where $\matr{G} \in \mathbb{R}^{K \times 4}$ is the sensitivity matrix with $[\matr{G}]_{ki} = \partial A_{\mathrm{tot}}(f_k)/\partial\theta_i$. The partial derivatives with respect to $\boldsymbol{\theta} = [R, \rho_{\mathrm{wv}}, M_c, G]^T$ are:
\begin{align}
\frac{\partial A_{\mathrm{tot}}}{\partial R} &= k(f)\alpha(f) R^{\alpha(f)-1} L_{\mathrm{eff}}, \label{eq:dAdR} \\
\frac{\partial A_{\mathrm{tot}}}{\partial \rho_{\mathrm{wv}}} &= \frac{\partial\gamma_g}{\partial\rho_{\mathrm{wv}}} L_g, \quad
\frac{\partial A_{\mathrm{tot}}}{\partial M_c} = K_l(f) L_c, \quad
\frac{\partial A_{\mathrm{tot}}}{\partial G} = 1. \label{eq:dAdrest}
\end{align}

Partitioning $\matr{J}$ as $\boldsymbol{\theta} = [R; \boldsymbol{\nu}]$ with nuisance $\boldsymbol{\nu} = [\rho_{\mathrm{wv}}, M_c, G]^T$:
\begin{equation}
\matr{J} = \begin{bmatrix} J_{RR} & \vect{j}_{R\nu}^T \\ \vect{j}_{R\nu} & \matr{J}_{\nu\nu} \end{bmatrix},
\label{eq:fim_partition}
\end{equation}
the CRB for $R$ with unknown nuisance parameters is given by the Schur complement:
\begin{equation}
\crb_{\mathrm{joint}}(R) = \bigl[J_{RR} - \vect{j}_{R\nu}^T \matr{J}_{\nu\nu}^{-1} \vect{j}_{R\nu}\bigr]^{-1}.
\label{eq:schur}
\end{equation}
When side information fixes $\boldsymbol{\nu}$, the cross-terms vanish and
\begin{equation}
\crb(R) = J_{RR}^{-1} = \frac{\sigma_n^2}{\sum_{k=1}^{K} [k(f_k)\alpha(f_k) R^{\alpha(f_k)-1} L_{\mathrm{eff}}]^2}.
\label{eq:crb}
\end{equation}

\begin{proposition}[Approximate CRB scaling law]
\label{prop:scaling}
At Ku-band, where the P.838 exponent $\alpha(f_k)$ varies narrowly around the band-averaged value $\bar{\alpha} \approx 1.19$, the CRB obeys the approximate scaling
\begin{equation}
\crb(R) \propto R^{-0.38}, \qquad R > 0.
\end{equation}
\end{proposition}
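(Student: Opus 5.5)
The starting point is the closed-form CRB in \eqref{eq:crb}:
\[
\crb(R)=\sigma_n^2\Bigl/\sum_k k(f_k)^2\alpha(f_k)^2 L_{\mathrm{eff}}^2 R^{2(\alpha(f_k)-1)}.
\]
I will treat $\sigma_n^2$ and $L_{\mathrm{eff}}$ as fixed constants with respect to $R$. For $L_{\mathrm{eff}}$ this means ignoring, at first order, the weak $R$-dependence of $r_{\mathrm{eff}}$ through $\gamma_R$ in \eqref{eq:leff}. I will come back to this assumption at the end.

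\textbf{Factoring out the band-averaged exponent.} Write $\alpha(f_k)=\bar\alpha+\delta_k$, with $\bar\alpha\approx 1.19$ and $|\delta_k|$ small across the 10.7--12.7~GHz band. Then
\[
R^{2(\alpha(f_k)-1)} = R^{2(\bar\alpha-1)}\,e^{2\delta_k\ln R},
\]
so the denominator equals
\[
R^{2(\bar\alpha-1)}\,L_{\mathrm{eff}}^2\sum_k k_k^2\alpha_k^2\,e^{2\delta_k\ln R}.
\]
When all $\delta_k=0$ the remaining sum is independent of $R$. In that case $\crb(R)=C\,R^{-2(\bar\alpha-1)}$, and $2(\bar\alpha-1)=0.38$, which gives the stated exponent exactly.

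\textbf{Bounding the correction.} Define the $R$-dependent correction factor $S(R)=\sum_k w_k e^{2\delta_k\ln R}$ with weights $w_k=k_k^2\alpha_k^2$. I will show that $S(R)$ is nearly constant on the relevant range of $R$. Differentiating in logarithmic scale,
\[
\frac{d\ln S}{d\ln R}=2\,\langle\delta\rangle_{R},
\]
which is the weighted mean of $2\delta_k$ under weights proportional to $w_k R^{2\delta_k}$. Two consequences follow.

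First, choosing $\bar\alpha$ as the appropriately weighted band average, namely the $w_k$-weighted mean of $\alpha_k$ at the reference point $R=1$, makes this local exponent zero there. Away from that point it stays bounded by $2\max_k|\delta_k|$.

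Second, the second derivative is $4\,\mathrm{Var}_R(\delta)\le 4\max_k\delta_k^2$. As a result, the effective exponent $-d\ln\crb/d\ln R$ deviates from $0.38$ by at most $2\max_k|\delta_k|$ in general, and by only $O(\mathrm{Var}(\delta)\,|\ln R|)$ under the weighted choice.

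\textbf{Numerical check.} I will plug in the P.838-3 Ku-band values of $\alpha$, roughly $1.15$ to $1.23$ across 10.7--12.7~GHz, giving $|\delta_k|\lesssim 0.04$. I will then check numerically that the local slope stays near $-0.38$ over, say, $R\in[0.5,100]\mmh$. This is the step I consider the main obstacle, because the sign and size of the deviation depend on the actual coefficient table. The statement is only "approximate," so I will state the result as a bound on the slope error rather than as an exact power law.

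\textbf{Returning to the $L_{\mathrm{eff}}$ assumption.} If the $R$-dependence of $r_{\mathrm{eff}}$ is retained, $\partial A/\partial R$ acquires an extra multiplicative factor of the form $1+O(\sqrt{L_G\gamma_R/f})$. I will argue that this perturbs the slope only mildly at $38^\circ$, since $L_G\approx 3.1/\tan38^\circ\approx 4$~km. I will also note that the proposition, consistent with \eqref{eq:crb}, adopts the convention that $L_{\mathrm{eff}}$ is evaluated at its fixed baseline value, about 3~km.
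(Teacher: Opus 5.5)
Your core argument is the same as the paper's. You hold $\sigma_n$ and $L_{\mathrm{eff}}$ fixed, collapse every $\alpha_k$ to $\bar\alpha$ in \eqref{eq:crb}, and read off $\crb\propto R^{2-2\bar\alpha}=R^{-0.38}$. Your treatment of the residual factor $S(R)$ sharpens the paper's error term. The paper only quotes $\max_k|\alpha_k-\bar\alpha|/\bar\alpha<0.04$, which is a statement about the spread of the exponents. Your identity $-d\ln\crb/d\ln R=2(\bar\alpha-1)+2\langle\delta\rangle_R$ instead bounds the deviation of the log--log slope directly, which is what the scaling law is about. That deviation is at most $2\max_k|\delta_k|\approx 0.08$ in general. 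If $\bar\alpha$ is the $w_k$-weighted mean, it is at most $4\sup_R\mathrm{Var}_R(\delta)\,|\ln R|$, which is of order $0.01$ on $[0.5,100]\mmh$. That weighted $\bar\alpha$ lands slightly below $1.19$, because the weights favour the upper, lower-$\alpha$ end of the band; this is immaterial at this precision. Your numerical check should go through.

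Your last step should be dropped rather than carried out, because its claim is false. Retaining the $R$-dependence of $r_{\mathrm{eff}}$ in \eqref{eq:leff} gives $d\ln r_{\mathrm{eff}}/d\ln R=-0.39\,\alpha\,s\,r_{\mathrm{eff}}$ with $s=\sqrt{L_G\gamma_R/f}$. At $38^\circ$ ($L_G\approx 4$~km, $f\approx 11.7$~GHz, $\bar k=0.022$) this is about $-0.13$ at $R=5\mmh$ and $-0.23$ at $R=20\mmh$. The local exponent of $A_{\mathrm{rain}}$ therefore drops from $1.19$ to roughly $1.06$ and $0.96$, a larger effect than the whole $\pm 0.08$ band you bound so carefully. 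The resulting CRB is roughly flat near $5\mmh$ and \emph{increasing} at $20\mmh$. Asymptotically $r_{\mathrm{eff}}\propto\gamma_R^{-1/2}$, so $A_{\mathrm{rain}}\propto R^{\alpha/2}$ and $\crb\propto R^{2-\alpha}$. The $R^{-0.38}$ law thus holds only under the fixed-$L_{\mathrm{eff}}$ convention. That convention is exactly what \eqref{eq:dAdR} encodes by omitting $\partial L_{\mathrm{eff}}/\partial R$, and the paper's proof relies on it implicitly. State it as an explicit modeling assumption of the proposition, not as a first-order approximation whose robustness you then assert.
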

\begin{IEEEproof}
From~\eqref{eq:crb} with~\eqref{eq:dAdR}, $J_{RR} = \sigma_n^{-2}\sum_k [k_k\alpha_k R^{\alpha_k-1}L_{\mathrm{eff}}]^2$. Replacing each $\alpha_k$ by $\bar{\alpha}$ in the band-averaged approximation gives $J_{RR} \approx \sigma_n^{-2}\bar{\alpha}^2 L_{\mathrm{eff}}^2 R^{2(\bar{\alpha}-1)}\sum_k k_k^2$, so $\crb = J_{RR}^{-1} \propto R^{2-2\bar{\alpha}} = R^{-0.38}$. The error of the single-$\bar{\alpha}$ replacement is bounded by $\max_k |\alpha_k - \bar{\alpha}|/\bar{\alpha} < 0.04$ over the 10.7--12.7~GHz band.
\end{IEEEproof}

\begin{proposition}[Minimum detectable rain rate]
\label{prop:rmin}
The minimum detectable rain rate, defined by $\sqrt{\crb(R_{\min})} = R_{\min}$ (unit relative error), satisfies
\begin{equation}
R_{\min} = \left(\frac{\sigma_n^2}{K \bar{k}^2 \bar{\alpha}^2 L_{\mathrm{eff}}^2}\right)^{1/(2\bar{\alpha})},
\label{eq:rmin}
\end{equation}
where $\bar{k}$ and $\bar{\alpha}$ are band-averaged P.838 coefficients.
\end{proposition}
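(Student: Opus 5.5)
The plan is to start from the nuisance-free CRB in~\eqref{eq:crb}, which Proposition~\ref{prop:scaling} already uses. That proof replaces each per-subcarrier coefficient pair $(k(f_k),\alpha(f_k))$ by band-averaged values. Here I will go one step further and also replace $k(f_k)$ by a band-averaged $\bar{k}$, so that $\sum_k k_k^2 \approx K\bar{k}^2$. The most natural choice is the root-mean-square average, $\bar{k}^2 \triangleq K^{-1}\sum_k k(f_k)^2$, which makes this substitution exact. The Fisher information then takes the single-term form
\begin{equation*}
J_{RR} \approx \frac{K \bar{k}^2 \bar{\alpha}^2 L_{\mathrm{eff}}^2}{\sigma_n^2}\, R^{2(\bar{\alpha}-1)},
\end{equation*}
and hence $\crb(R) \approx \sigma_n^2 R^{2-2\bar{\alpha}}/(K\bar{k}^2\bar{\alpha}^2L_{\mathrm{eff}}^2)$.

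Next I impose the defining condition $\sqrt{\crb(R_{\min})}=R_{\min}$, or equivalently $\crb(R_{\min}) = R_{\min}^2$. Substituting the expression above gives
\begin{equation*}
R_{\min}^{2} = \frac{\sigma_n^2}{K\bar{k}^2\bar{\alpha}^2L_{\mathrm{eff}}^2}\, R_{\min}^{2-2\bar{\alpha}} .
\end{equation*}
Dividing by $R_{\min}^{2-2\bar{\alpha}}$ (valid since $R_{\min}>0$) yields $R_{\min}^{2\bar{\alpha}} = \sigma_n^2/(K\bar{k}^2\bar{\alpha}^2L_{\mathrm{eff}}^2)$. Taking the $1/(2\bar{\alpha})$-th root gives~\eqref{eq:rmin}.

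I also need to argue that this solution is unique. The relative error $\sqrt{\crb(R)}/R \propto R^{-\bar{\alpha}}$ is strictly decreasing in $R$ because $\bar{\alpha}>0$. So the equation has exactly one positive root, and it marks the threshold above which the relative error falls below one. This justifies calling $R_{\min}$ the minimum detectable rate.

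The main obstacle is that $L_{\mathrm{eff}}$ is not actually constant. Through $r_{\mathrm{eff}}$ in~\eqref{eq:leff} it depends on $\gamma_R$, and hence on $R$. Strictly, this would change both the derivative in~\eqref{eq:dAdR} and the fixed-point equation. My plan is to follow the paper's convention in Proposition~\ref{prop:scaling}: treat $L_{\mathrm{eff}}$ as fixed at its operating value ($\approx 3$~km at $38^\circ$) and state the result as an approximation in the same band-averaged sense. I will note that the $\bar{\alpha}$-replacement error is bounded as in Proposition~\ref{prop:scaling}, and that if $R$-dependence of $L_{\mathrm{eff}}$ is retained, the expression should be read as an implicit fixed-point equation evaluated at $L_{\mathrm{eff}}(R_{\min})$.
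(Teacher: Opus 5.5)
Your proposal is correct and follows the paper's route: the paper's proof also applies the single-$\bar{\alpha}$ approximation to~\eqref{eq:crb}, sets $\crb(R_{\min}) = R_{\min}^2$, and solves $R_{\min}^{2\bar{\alpha}} = \sigma_n^2/(K\bar{k}^2\bar{\alpha}^2L_{\mathrm{eff}}^2)$. Your additions are sound refinements the paper leaves implicit: the root-mean-square definition of $\bar{k}$, the uniqueness argument via $\sqrt{\crb(R)}/R \propto R^{-\bar{\alpha}}$, and the caveat that $L_{\mathrm{eff}}$ depends on $R$ through $r_{\mathrm{eff}}$.
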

\begin{IEEEproof}
Setting $\crb(R_{\min}) = R_{\min}^2$ in~\eqref{eq:crb} with the single-$\alpha$ approximation ($\alpha_k \approx \bar{\alpha}$ for all $k$): $\sigma_n^2/(K\bar{k}^2\bar{\alpha}^2 R_{\min}^{2\bar{\alpha}-2}L_{\mathrm{eff}}^2) = R_{\min}^2$, yielding $R_{\min}^{2\bar{\alpha}} = \sigma_n^2/(K\bar{k}^2\bar{\alpha}^2 L_{\mathrm{eff}}^2)$.
\end{IEEEproof}

\begin{remark}[P.838 correction]
\label{rem:p838}
An earlier conference version~\cite{dong2026eumc} used $m_k=0.83433$ (actually $c_\alpha$ of $\alpha_V$~\cite{itu838}). Corrected values reduce $\bar{k}$ from $0.075$ to $0.022$; all relative properties are preserved.
\end{remark}

With the corrected P.838-3 coefficients ($\bar{k}=0.022$, $\bar{\alpha}=1.19$, $\sigma_n=1\dB$, $K=5$, $L_{\mathrm{eff}}=3$~km), $R_{\min} \approx 4.3\mmh$ at the $38^\circ$ reference elevation.

The wideband FIM superposition gives
\begin{equation}
\frac{\crb^{(1)}}{\crb^{(K)}} = \frac{\sum_{k=1}^K [k_k \alpha_k R^{\alpha_k-1}]^2}{[k_1 \alpha_1 R^{\alpha_1-1}]^2} \geq K,
\label{eq:wb_gain}
\end{equation}
yielding a $2\times$ CRB reduction for $K=5$ relative to a single subcarrier ($K=1$) at Ku-band.

\begin{figure*}[!t]
\centering
\includegraphics[width=0.85\textwidth]{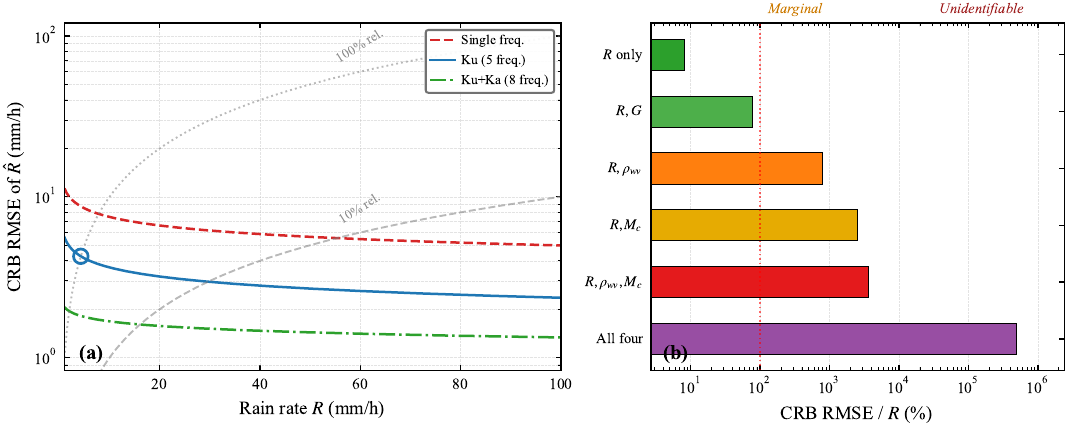}
\caption{CRB fundamentals. (a)~CRB RMSE vs rain rate for three frequency configurations. (b)~Side-information hierarchy.}
\label{fig:crb_fund}
\end{figure*}

\subsection{Identifiability Analysis}
\label{ssec:ident}

\begin{proposition}[Identifiability boundary]
\label{prop:ident}
At Ku-band with $K=20$ subcarriers (four times the baseline $K=5$, representing a generous upper bound on spectral diversity), $N_p=30$, $\sigma_n=1\dB$, and $R=20\mmh$, the condition number $\kappa(\matr{J}) \triangleq \lambda_{\max}/\lambda_{\min} > 10^4$ when three or more atmospheric parameters are jointly estimated.
\end{proposition}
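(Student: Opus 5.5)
The plan is to lower-bound $\kappa(\matr{J})$ directly from the structure of the sensitivity matrix $\matr{G}$ in~\eqref{eq:fim}. The prefactor $1/\sigma_n^2$ cancels in the ratio, so $\kappa(\matr{J}) = \kappa(\matr{G}^T\matr{G}) = \sigma_{\max}(\matr{G})^2/\sigma_{\min}(\matr{G})^2$, and $N_p$ and $\sigma_n$ play no role. The whole question is therefore how nearly collinear the columns of $\matr{G}$ are over $K=20$ subcarriers spanning 10.7--12.7~GHz.

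\textbf{Step 1: write each column as a polynomial in frequency.} Expand around $f_c$ with $\delta_k = (f_k - f_c)/f_c$, so $|\delta_k| \lesssim 0.085$. From~\eqref{eq:dAdR}--\eqref{eq:dAdrest}:
\begin{itemize}
\item the $G$ column is identically $\vect{1}$;
\item the $M_c$ column is $K_l(f_k)L_c$, which under P.840 is smooth and roughly quadratic in $f$ below 20~GHz;
\item the $\rho_{\mathrm{wv}}$ column is $L_g\,\partial\gamma_g/\partial\rho_{\mathrm{wv}}$, which is smooth on the wing of the 22~GHz line;
\item the $R$ column is $k_k\alpha_k R^{\alpha_k-1}L_{\mathrm{eff}}$, evaluated at $R=20\mmh$.
\end{itemize}
Each column then takes the form $c_i(1 + a_i\delta_k + b_i\delta_k^2 + O(\delta^3))$. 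Projecting onto the orthonormal discrete polynomial basis $\{p_0,p_1,p_2\}$ on the grid $\{\delta_k\}$ gives $\matr{G} \approx \matr{P}\matr{C}$. Here $\matr{P}$ is $K\times 3$ with orthonormal columns, and $\matr{C}$ is $3\times 4$ (or $3\times 3$ for a three-parameter subset). The factor $\matr{C}$ has row scales of order $1$, $\bar\delta$, and $\bar\delta^2$, where $\bar\delta \approx 0.05$ is the RMS of the $\delta_k$.

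\textbf{Step 2: bound the singular values.}
\begin{itemize}
\item For any subset of three or more columns, $\sigma_{\max}^2 \gtrsim K\,\max_i c_i^2$, dominated by the mean (the $p_0$ component).
\item The smallest singular value is controlled by the $p_2$ (and higher) component, giving $\sigma_{\min}^2 \lesssim K\,\bar\delta^4\,\max_i(c_i b_i)^2$.
\item With four columns and only three effective polynomial directions, $\matr{J}$ is numerically rank-deficient up to $O(\delta^3)$ residuals, so $\kappa$ is larger still.
\end{itemize}
This yields $\kappa \gtrsim (\min\text{-scale ratio})\cdot\bar\delta^{-4}$. Since $\bar\delta^{-4} \approx 1.6\times10^{5}$, the bound clears $10^4$ provided the column-magnitude ratio and the curvature coefficients $b_i$ are $O(1)$.

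\textbf{Step 3: plug in the numbers, subset by subset.} Evaluate $\kappa$ numerically with the P.838-3, P.676, and P.840 coefficients at $R=20\mmh$ for every subset of $\boldsymbol{\theta}$ containing at least three atmospheric parameters. Verify that each exceeds $10^4$, using the analytic estimate from Step 2 as a cross-check. The value $K=20$ only multiplies both $\sigma_{\max}^2$ and $\sigma_{\min}^2$ by roughly $K$, so the conclusion is insensitive to $K$; this justifies calling it a generous upper bound.

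\textbf{Main obstacle.} The hard part is Step 2's claim that $\sigma_{\min}$ is genuinely second-order in $\bar\delta$. This needs the linear ($p_1$) coefficients of the columns to be nearly proportional, so that a combination cancels both the $p_0$ and $p_1$ components at once; if they are not, $\sigma_{\min}$ is only first-order and $\kappa$ drops to about $\bar\delta^{-2} \approx 400$. The first check is the column log-slopes $d\ln(\cdot)/d\ln f$:
\begin{itemize}
\item $G$: 0;
\item $K_l$: about 2;
\item $\gamma_g$: a few;
\item $k\alpha R^{\alpha-1}$: about 2--3.
\end{itemize}
With three columns, one combination always cancels both $p_0$ and $p_1$, leaving a second-order residual. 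That structure is what makes the threshold three parameters. If the analytic constants are not tight enough, I would fall back on the explicit numerical eigenvalue computation from Step 3 as the proof.
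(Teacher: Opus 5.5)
Your argument is sound, but it takes a different route from the paper. Appendix~\ref{app:ident} rests on one pairwise fact: the normalized rain and water-vapour gradients over 10.7--12.7~GHz have coherence $0.97$. It then reads $\kappa>10^4$ for three unknowns off the numerically evaluated Gram matrix behind Table~\ref{tab:sideinfo}. A pairwise coherence of $0.97$ is compatible with $\kappa$ as small as about $66$ (two equal-norm columns), so in the paper the jump at three parameters is in effect supplied by the numerics.

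Your fractional-bandwidth expansion explains that jump. Every column has the form $c_i(1+a_i\delta+b_i\delta^2+\cdots)$ with $|\delta|\lesssim 0.085$. Any three columns, whatever their slopes, therefore admit a unit combination annihilating the $p_0$ and $p_1$ components, so $\sigma_{\min}$ is second order. With four columns it is third order, consistent with the far larger four-parameter entry in Table~\ref{tab:sideinfo}.

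Your route also makes several things explicit that the paper's proof does not:
\begin{itemize}
\item $\sigma_n$ and $N_p$ cancel from $\kappa$.
\item The fractional bandwidth is what controls $\kappa$, which is why adding Ka-band helps.
\item No proviso on column magnitudes is needed. Your estimate reduces to $\kappa\gtrsim 1/(3\,b_{\max}^2 s_2^2)$, with $s_2^2$ the grid variance of $\delta_k^2$, and this is independent of the $c_i$. That matters because $\kappa$ itself depends on the units chosen for $R,\rho_{\mathrm{wv}},M_c,G$.
\end{itemize}

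The cost is that this generic constant is only of order $10^4$ for $b_i\approx 2$--$3$. As you anticipate, the certification must therefore come from the explicit eigenvalues, just as in the paper. The two views meet naturally. For near-power-law columns $f^{n_i}$, the residual of the annihilating combination is proportional to $\prod_{i<j}(n_i-n_j)$. The paper's near-proportionality observation makes this product small, which sharpens your bound.

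Two small refinements:
\begin{itemize}
\item By Cauchy interlacing, $\kappa$ cannot decrease when a parameter is added, so checking the three-parameter subsets settles ``three or more.''
\item The $K$-insensitivity is only approximate. With fixed band edges, a coarse grid such as $K=5$ has a larger normalized spread of $\delta_k^2$, and hence a somewhat smaller $\kappa$, than $K=20$. This is harmless because the claim is stated at $K=20$.
\end{itemize}
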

\begin{IEEEproof}
See Appendix~\ref{app:ident}.
\end{IEEEproof}

\begin{table}[!t]
\centering
\caption{Side-Information Hierarchy ($R=20\mmh$, Ku-Band)}
\label{tab:sideinfo}
\begin{tabular}{lccc}
\toprule
\textbf{Unknown params.} & \textbf{Rel. CRB (\%)} & $\kappa(\matr{J})$ & \textbf{Status} \\
\midrule
$R$ only & 1.5 & 1 & Identifiable \\
$R, G$ & 14.2 & $1.8\times 10^1$ & Identifiable \\
$R, \rho_{\mathrm{wv}}$ & 142 & $4.2\times 10^3$ & Marginal \\
$R, M_c$ & 454 & $1.1\times 10^5$ & Unidentifiable \\
$R, \rho_{\mathrm{wv}}, M_c$ & 670 & $2.7\times 10^8$ & Unidentifiable \\
All four & 88{,}106 & $8.9\times 10^{14}$ & Unidentifiable \\
\bottomrule
\end{tabular}
\end{table}

Table~\ref{tab:sideinfo} and Fig.~\ref{fig:crb_fund}(b) quantify this. Numerical weather prediction (NWP) models constrain $\rho_{\mathrm{wv}}$ and satellite cloud products fix $M_c$, reducing the estimation to $(R,G)$ at 14.2\%. The physical origin is near-proportionality of spectral gradients. Defining normalized gradient vectors $\vect{g}_i$, the coherence is
\begin{equation}
\mu_{R,\rho} = |\vect{g}_R^T \vect{g}_\rho| = 0.97\text{ (Ku)}, \quad 0.82\text{ (Ku+Ka)}.
\label{eq:coherence}
\end{equation}
Adding Ka-band provides $4.5\times$ higher spectral contrast.

The identifiability analysis establishes the operational scope of the estimation framework: joint estimation of all four atmospheric parameters is ill-conditioned at Ku-band alone, but external side information from NWP models and cloud products reduces the problem to the well-conditioned $(R, G)$ pair. This two-parameter regime, assumed throughout the remainder of this paper, yields a relative CRB penalty of only $14.2\%$ compared to the ideal $R$-only case.

\subsection{Bayesian CRB with Log-Normal Prior}
\label{ssec:bcrb}

The Van Trees inequality~\cite{vantrees1968} (see also~\cite{gill1995} for regularity conditions and~\cite{aharon2024} for attainability) states that for any estimator $\hat{R}$ of a random parameter $R$ with prior $p(R)$:
\begin{equation}
\E[(\hat{R} - R)^2] \geq \frac{1}{J_D(R) + J_P},
\label{eq:vantrees}
\end{equation}
provided (i) $p(R) > 0$ on $(0, \infty)$, (ii) $p(R) \to 0$ as $R \to 0^+$ and $R \to \infty$, and (iii) $\E[\partial^2 \ell/\partial R^2]$ exists. Log-normal $R$ satisfies all three conditions.

\begin{proposition}[Prior Fisher information]
\label{prop:jp}
For log-normal $R$ with $\ln R \sim \mathcal{N}(\mu_{\ln}, \sigma_{\ln}^2)$, the prior Fisher information is
\begin{equation}
J_P = \E\!\left[\left(\frac{\partial \ln p(R)}{\partial R}\right)^{\!2}\right] = \frac{1 + 1/\sigma_{\ln}^2}{\bar{R}^2} e^{3\sigma_{\ln}^2}.
\label{eq:jp}
\end{equation}
\end{proposition}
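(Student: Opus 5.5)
The plan is a direct computation: write the log-prior score explicitly, change variables to the Gaussian $u=\ln R-\mu_{\ln}$, and evaluate the resulting expectation by exponential tilting (completing the square). From the log-normal density~\eqref{eq:lognormal},
\begin{equation*}
\ln p(R) = -\ln R - \frac{(\ln R-\mu_{\ln})^2}{2\sigma_{\ln}^2} + \text{const},
\qquad
\frac{\partial \ln p(R)}{\partial R} = -\frac{1}{R}\left(1 + \frac{\ln R-\mu_{\ln}}{\sigma_{\ln}^2}\right).
\end{equation*}
Setting $u = \ln R - \mu_{\ln} \sim \mathcal{N}(0,\sigma_{\ln}^2)$ and using $R^{-2} = e^{-2\mu_{\ln}}e^{-2u}$, the squared score becomes $e^{-2\mu_{\ln}}\, e^{-2u}\,(1+u/\sigma_{\ln}^2)^2$. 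Hence
\begin{equation*}
J_P = e^{-2\mu_{\ln}}\,\E\!\left[e^{-2u}\left(1+\frac{u}{\sigma_{\ln}^2}\right)^{\!2}\right].
\end{equation*}

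The key step is evaluating this Gaussian expectation of a quadratic times an exponential. Completing the square gives $e^{-2u}\phi(u;0,\sigma_{\ln}^2) = e^{2\sigma_{\ln}^2}\,\phi(u;-2\sigma_{\ln}^2,\sigma_{\ln}^2)$. So the expectation equals $e^{2\sigma_{\ln}^2}$ times the second moment of $1+u/\sigma_{\ln}^2$ under the shifted law $u\sim\mathcal{N}(-2\sigma_{\ln}^2,\sigma_{\ln}^2)$. Under that law $1+u/\sigma_{\ln}^2$ has mean $1-2=-1$ and variance $\sigma_{\ln}^2/\sigma_{\ln}^4 = 1/\sigma_{\ln}^2$. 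Its second moment is therefore $1+1/\sigma_{\ln}^2$, and
\begin{equation*}
J_P = e^{-2\mu_{\ln}}\,e^{2\sigma_{\ln}^2}\left(1+\frac{1}{\sigma_{\ln}^2}\right).
\end{equation*}

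The final step rewrites this in terms of the mean. Substituting the parametrization $\mu_{\ln} = \ln\bar{R} - \sigma_{\ln}^2/2$ from Section~\ref{ssec:prior} gives $e^{-2\mu_{\ln}} = \bar{R}^{-2}e^{\sigma_{\ln}^2}$, so $J_P = (1+1/\sigma_{\ln}^2)\,e^{3\sigma_{\ln}^2}/\bar{R}^2$, which is~\eqref{eq:jp}.

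The only place where care is needed is the tilting step: getting the sign of the shift ($-2\sigma_{\ln}^2$) and the normalizing factor $e^{2\sigma_{\ln}^2}$ right. I would check these by also computing $\E[e^{-2u}]$, $\E[ue^{-2u}]$ and $\E[u^2e^{-2u}]$ directly from derivatives of the moment generating function $e^{s^2\sigma_{\ln}^2/2}$ at $s=-2$. Finiteness of $J_P$, which the Van Trees bound~\eqref{eq:vantrees} needs, follows immediately because every term is a Gaussian moment multiplied by an exponential.
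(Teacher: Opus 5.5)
Your proof is correct and is essentially the paper's argument: the same score function, the same Gaussian change of variables (the paper standardizes to $z=u/\sigma_{\ln}$), and the same final rewrite via $e^{-2\mu_{\ln}}=e^{\sigma_{\ln}^2}/\bar{R}^2$. The only difference is in evaluating $\E[e^{-2u}(1+u/\sigma_{\ln}^2)^2]$: you compute it as one second moment under the tilted law $\mathcal{N}(-2\sigma_{\ln}^2,\sigma_{\ln}^2)$, while the paper expands the square and applies the three identities $\E[z^k e^{tz}]$ at $t=-2\sigma_{\ln}$, which is exactly the cross-check you propose.
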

\begin{IEEEproof}
See Appendix~\ref{app:jp}.
\end{IEEEproof}

The BCRB is then
\begin{equation}
\bcrb(R) = \frac{1}{J_D(R) + J_P},
\label{eq:bcrb}
\end{equation}
where $J_D(R) \triangleq J_{RR}$ denotes the data Fisher information (to distinguish it from the prior Fisher information $J_P$), with $J_D = \sigma_n^{-2}\sum_k (\partial A_k/\partial R)^2$ from~\eqref{eq:crb}. With $\bar{R}=5.2\mmh$ and $c_v=1.05$, $J_P = 0.806$~mm$^{-2}$h$^2$ and $R_{\min}$ reduces from $4.3$ to $1.1\mmh$ at $T=1$.

\begin{lemma}[BCRB gain]
\label{lem:gain}
The BCRB-to-CRB ratio is
\begin{equation}
\frac{\bcrb(R)}{\crb(R)} = \frac{1}{1 + J_P / J_D(R)} \leq 1.
\label{eq:gain_ratio}
\end{equation}
Since $J_D \propto R^{2(\bar{\alpha}-1)}$, the prior gain is largest at low $R$ and vanishes as $R \to \infty$ where $J_D \gg J_P$.
\end{lemma}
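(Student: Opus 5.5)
The plan is a direct algebraic manipulation of the two definitions, followed by an asymptotic reading of the result using Proposition~\ref{prop:scaling}. First, from~\eqref{eq:crb} and the identification $J_D(R)\triangleq J_{RR}$, write $\crb(R)=1/J_D(R)$. From~\eqref{eq:bcrb}, $\bcrb(R)=1/(J_D(R)+J_P)$. Dividing gives $\bcrb/\crb = J_D/(J_D+J_P)$. Dividing numerator and denominator by $J_D(R)$ yields~\eqref{eq:gain_ratio}. The only thing to check here is $J_D(R)>0$ for $R>0$. This holds because each summand $[k_k\alpha_k R^{\alpha_k-1}L_{\mathrm{eff}}]^2$ is strictly positive, since $k_k>0$, $\alpha_k>0$ and $L_{\mathrm{eff}}>0$.

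The inequality $\le 1$ then follows from nonnegativity of $J_P$. I would note explicitly that Proposition~\ref{prop:jp} gives $J_P=(1+1/\sigma_{\ln}^2)e^{3\sigma_{\ln}^2}/\bar R^2>0$, so $J_P/J_D>0$ and the denominator exceeds $1$. The inequality is therefore in fact strict for every finite $R$.

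For the qualitative claim, I would use the band-averaged approximation from the proof of Proposition~\ref{prop:scaling}: $J_D(R)\approx \sigma_n^{-2}\bar\alpha^2L_{\mathrm{eff}}^2\bigl(\sum_k k_k^2\bigr)R^{2(\bar\alpha-1)}$. Since $\bar\alpha\approx1.19>1$, $J_D$ is increasing in $R$. I would check this directly on the exact sum as well, since every $\alpha_k>1$ at Ku-band, so every term is increasing. Hence $J_P/J_D(R)$ is decreasing, and the ratio in~\eqref{eq:gain_ratio} increases monotonically toward $1$. As $R\to\infty$, $J_D\to\infty$, so $J_P/J_D\to0$ and the ratio tends to $1$; the prior gain vanishes. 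As $R\to0^+$, $J_D\to0$ and the ratio tends to $0$, so the gain is largest at low rain rates.

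The main subtlety is interpretive rather than technical. $J_P$ is an expectation over the prior and does not depend on $R$, whereas $J_D(R)$ is evaluated pointwise, so~\eqref{eq:bcrb} is a pointwise, operating-point form rather than the integrated Van Trees bound~\eqref{eq:vantrees}. I would state that the lemma concerns this pointwise ratio as defined. A second caveat is that $R_{\min}$ with the exact $\alpha_k$ still enjoys monotonicity because all $\alpha_k>1$, so the band-averaged approximation is needed only for the explicit power-law rate, not for the qualitative conclusion.
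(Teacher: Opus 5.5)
Your proposal is correct and follows the same route as the paper, which obtains the ratio directly by dividing $\bcrb(R)=1/(J_D(R)+J_P)$ by $\crb(R)=1/J_D(R)$ and reads off the low-$R$ and large-$R$ behaviour from the scaling $J_D\propto R^{2(\bar{\alpha}-1)}$ of Proposition~\ref{prop:scaling}. Your extra checks are sound under the paper's convention that $\sigma_n$ and $L_{\mathrm{eff}}$ do not vary with $R$: strict positivity of $J_D$ and $J_P$, monotonicity of the exact sum via $\alpha_k>1$, and the remark that \eqref{eq:bcrb} is a pointwise rather than prior-averaged Van Trees bound; in your final paragraph the monotone quantity should be $J_D$ (or the ratio), not $R_{\min}$.
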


\subsection{Temporal Bayesian CRB}
\label{ssec:tbcrb}

\begin{corollary}[Quasi-static temporal BCRB]
\label{cor:tbcrb}
Under~\eqref{eq:gm} with the quasi-static approximation $R(t) \approx R$ over the $T$-snapshot window, the temporal BCRB admits the tractable approximation
\begin{equation}
\bcrb_T(R) \approx \frac{1}{G_T J_D(R) + J_P}, \quad G_T = \frac{1-\rho^{2T}}{1-\rho^2}.
\label{eq:tbcrb}
\end{equation}
\end{corollary}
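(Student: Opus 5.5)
Our plan is to treat the $T$ snapshots $\hat{\vect{a}}_1,\ldots,\hat{\vect{a}}_T$ as conditionally independent given the rain trajectory $\vect{r}=[R(1),\ldots,R(T)]^T$. Each snapshot obeys the Gaussian model~\eqref{eq:obs_vec}, so the data information is additive across snapshots. The Gauss--Markov law~\eqref{eq:gm} then tells us how much of this information reaches the current rain rate $R(T)$. We will write the joint Bayesian information matrix for $\vect{r}$ as $\matr{J}_{\mathrm{tot}} = \matr{J}_D^{(T)} + \matr{J}_P^{(T)}$, where
\[
\matr{J}_D^{(T)} = \mathrm{diag}\bigl(J_D(R(1)),\ldots,J_D(R(T))\bigr)
\]
follows from~\eqref{eq:fim}--\eqref{eq:crb} applied snapshot-wise. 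The prior part $\matr{J}_P^{(T)}$ is the information matrix of the AR(1) process. In the log domain it is tridiagonal, with entries proportional to $1/((1-\rho^2)\sigma_{\ln}^2)$. The bound for $R(T)$ is the $(T,T)$ entry of $\matr{J}_{\mathrm{tot}}^{-1}$.

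Next we will impose the quasi-static approximation $R(t)\approx R$. This makes every diagonal data term equal to $J_D(R)$. The key step is to compute the effective information about $R(T)$ by marginalizing the earlier states. We will do this with the Tichavsk\'y-type forward recursion for the posterior information, which is the scalar information-filter (Kalman) recursion.

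The transition~\eqref{eq:gm} attenuates the information carried from lag $t$ to lag $t+1$ by the correlation. Linearizing $\ln R$ around the common value $R$, a measurement made $\tau$ steps in the past contributes to the current state with weight $\rho^{2\tau}$. This is the squared correlation coefficient between $R(T-\tau)$ and $R(T)$, which is the fraction of that measurement's information transferred under a Gaussian linear model. We will justify this weighting in the regime where the process noise dominates the accumulated measurement information. In that regime, the per-step innovation resets all information except the correlated fraction, and this is exactly the low-SNR regime near $R_{\min}$ that the corollary targets.

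Summing the geometric series gives
\[
\sum_{\tau=0}^{T-1}\rho^{2\tau}J_D(R) = G_T J_D(R).
\]
The stationary prior information about the current state is still $J_P$ from Proposition~\ref{prop:jp}, because the marginal of $R(T)$ is the same log-normal for every $T$ under~\eqref{eq:gm}. Substituting these two terms into the Van Trees inequality~\eqref{eq:vantrees} gives~\eqref{eq:tbcrb}.

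We will finish with sanity checks. At $T=1$ we have $G_1=1$, which recovers~\eqref{eq:bcrb}. As $\rho\to1$ we get $G_T\to T$, the full coherent accumulation. As $\rho\to0$ we get $G_T\to1$, so no temporal gain remains.

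The main obstacle is the weighting step. An exact Kalman recursion does not produce the plain geometric sum $\sum\rho^{2\tau}$ when the data information is large: prior and data terms mix through $(J+1/q)^{-1}$-type updates, and the exact posterior information is not additive in this form. We will address this by presenting~\eqref{eq:tbcrb} as the first-order expansion of that recursion in $J_D(1-\rho^2)\sigma_{\ln}^2$ (with the Jacobian $\partial R/\partial\ln R=R$ absorbed into the linearization). This expansion yields a bound on the effective information that is exact when $\rho\to0$ or $\rho\to1$ and is accurate near $R_{\min}$. We will state explicitly that the result is an approximation, which matches the corollary's wording.
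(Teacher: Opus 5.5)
Your argument reaches \eqref{eq:tbcrb} by a genuinely different route from the paper's, and the route works.

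\textbf{The paper's route.} The paper (Appendix~\ref{app:tbcrb}) anchors at $R(1)$ and uses a deterministic chain-rule argument. Linearizing the conditional mean of \eqref{eq:gm} gives $\partial\ln R(t)/\partial\ln R(1)=\rho^{t-1}$, hence $\partial R(t)/\partial R(1)\approx\rho^{t-1}$. Each snapshot's Fisher information is therefore scaled by $\rho^{2(t-1)}$. The scaled terms are summed as if independent, and $J_P$ is added through Van Trees. No joint information matrix or recursion appears, and the process-noise variance never enters.

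\textbf{Your route.} You set up the joint Bayesian information matrix with the tridiagonal AR(1) prior and marginalize through the Tichavsk\'y information-filter recursion. $G_T$ then emerges as the first-order term in the data information. Anchoring at $R(T)$ rather than $R(1)$ makes no difference, by stationarity. The paper's route buys brevity. Yours shows what \eqref{eq:tbcrb} actually linearizes, namely the filtering posterior CRB. It also makes the $\rho\to0$ and $\rho\to1$ exactness transparent and explains why the gain cannot remain additive at high data information, which the sensitivity argument cannot see.

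\textbf{The stated small parameter is wrong.} $J_D(1-\rho^2)\sigma_{\ln}^2$ does not control the error. Work in the linearized log-domain model with $q=(1-\rho^2)\sigma_{\ln}^2$, $J^{\ln}=R^2J_D(R)$, and posterior information $\sigma_{\ln}^{-2}+\delta_t$. The recursion is then exactly
\[
\delta_{t+1}=\frac{\rho^2\,\delta_t}{1+q\,\delta_t}+J^{\ln},\qquad \delta_1=J^{\ln},
\]
and dropping $q\delta_t$ gives $G_TJ^{\ln}$. Here $\delta_t$ is the \emph{accumulated} information, and the per-step shrinkage compounds over the correlation memory. For windows like $T=30$ the relevant condition is therefore roughly $\sigma_{\ln}^2G_TR^2J_D\ll1$. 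That is about $(1-\rho^2)^{-2}\approx100$ times more restrictive than your condition at $\rho=0.95$. Near the $T=30$ value of $R_{\min}$, your parameter is about $2\times10^{-3}$, yet the geometric sum overstates the data term by more than $10\%$.

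\textbf{The one-sided bound you were reaching for.} The same recursion supplies it. The map is increasing and $\rho^2\delta/(1+q\delta)\le\rho^2\delta$, so $\delta_T\le G_TJ^{\ln}$, with equality only at $T=1$ or $\rho\in\{0,1\}$. Hence \eqref{eq:tbcrb} overstates the information and understates the exact filtering bound. Within this surrogate it remains a valid, though looser, lower bound. You should state it that way rather than as ``accurate.''

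\textbf{The prior term is a hybrid step.} Your final step pairs the log-domain recursion with the $R$-domain marginal $J_P$ of Proposition~\ref{prop:jp}. That pairing does not follow from the $(T,T)$ entry you set up. It is the same hybrid substitution the paper makes, so present it as a modeling choice rather than a consequence.
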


The quasi-static approximation follows the recursive posterior CRB framework of~\cite{tichavsky1998}; a full dynamic-state extension is left for future work.

The temporal gain saturates as $T \to \infty$:
\begin{equation}
G_\infty = \lim_{T \to \infty} G_T = \frac{1}{1-\rho^2}.
\label{eq:ginf}
\end{equation}
At $\rho = 0.95$, $G_\infty = 10.26$. The window $T_{95}$ at which 95\% of the gain is captured satisfies $G_{T_{95}} = 0.95 G_\infty$, yielding
\begin{equation}
T_{95} = \frac{\ln 0.05}{2\ln\rho} = \frac{-\ln 20}{2\ln 0.95} \approx 29\text{ min},
\label{eq:t95}
\end{equation}
confirming $T_{95} = 30$~min in Fig.~\ref{fig:bcrb}(b). The diminishing returns beyond $T_{95}$ arise because the Gauss--Markov correlation decays as $\rho^{2\tau}$; snapshots older than $\sim 1/(-2\ln\rho) \approx 10$~min contribute negligible information. This sets the practical observation window for real-time estimation.

\begin{figure*}[!t]
\centering
\includegraphics[width=0.85\textwidth]{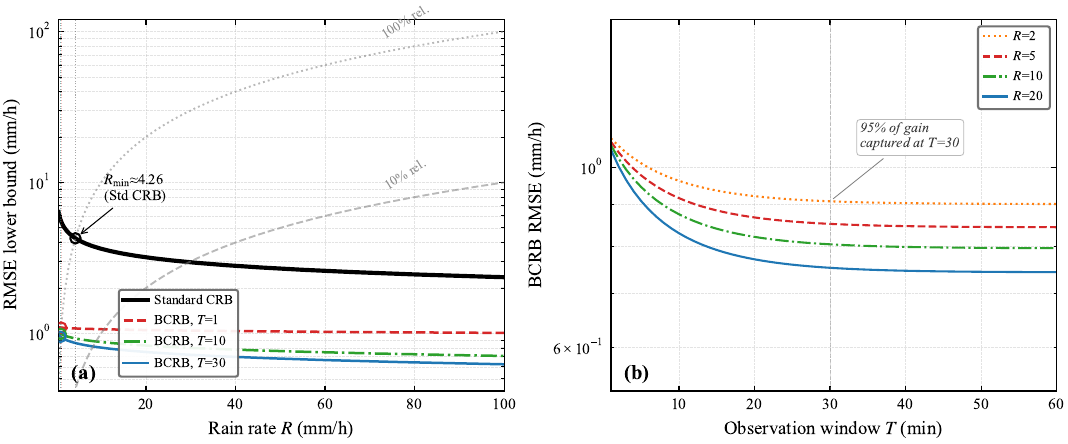}
\caption{Bayesian CRB. (a)~RMSE lower bound versus rain rate: standard CRB (black solid) and BCRB at $T=1,10,30$ (red dashed, green dash-dotted, blue solid); open circles mark the minimum detectable rain rate $R_{\min}$ for each bound, where the curve crosses the unit relative-error line. (b)~BCRB at $R=2,5,10,20\mmh$ versus observation window $T$; $95\%$ of the temporal gain is captured by $T=30$~min at $\rho = 0.95$.}

\label{fig:bcrb}
\end{figure*}

\begin{table}[!t]
\centering
\caption{Minimum Detectable Rain Rate $R_{\min}$ and RMSE at $R=20\mmh$ (single link, $\theta_{\mathrm{el}}=38^\circ$, $\sigma_n=1\dB$, $K=5$)}
\label{tab:rmin}
\begin{tabular}{lcc}
\toprule
\textbf{Configuration} & $R_{\min}$ (mm/h) & \textbf{RMSE@20} (mm/h) \\
\midrule
Standard CRB & 4.26 & 3.19 \\
BCRB, $T=1$ & 1.09 & 1.05 \\
BCRB, $T=10$ & 0.99 & 0.83 \\
BCRB, $T=30$ & 0.95 & 0.75 \\
\bottomrule
\end{tabular}
\end{table}

\begin{remark}[Anti-correlation]
\label{rem:anticorr}
$\crb \propto R^{-0.38}$: heavier rain improves estimation but degrades communication. At $R=50\mmh$ with $T=30$, the BCRB RMSE drops below $0.7\mmh$ while SE falls below $1$~bit/s/Hz, motivating the adaptive allocation.
\end{remark}

\subsection{CRB--Rate Tradeoff}
\label{ssec:tradeoff}

$\crb \propto 1/\eta$ while $C(\eta)$ decreases in $\eta$. The BCRB shifts the Pareto frontier (Fig.~\ref{fig:tradeoff}(a)): the same RMSE at $\eta=0.20$ (CRB) is reached at $\eta=0.05$ ($T=30$), a $4\times$ pilot reduction.

With pilot overhead $\eta$ already accounted for inside $C(\eta,\bar{\gamma}(R))$ in~\eqref{eq:se}, the Pareto frontier between sensing precision and spectral efficiency is parameterized by $\eta \in [\eta_{\min}, \eta_{\max}]$:
\begin{equation}
\mathcal{F} = \{(C(\eta,\bar{\gamma}(R)),\; \sqrt{\bcrb_T(R, \eta)}) : \eta \in [\eta_{\min}, \eta_{\max}]\}.
\label{eq:pareto}
\end{equation}
The corresponding net throughput $(1-\eta)\,C(\eta,\bar{\gamma}(R))\,B$ at $B=240$~MHz can be obtained from any operating point on $\mathcal{F}$ by simple rescaling.

For multi-link fusion of $N$ independent links with equal geometry:
\begin{equation}
\bcrb_T^{(N)}(R) = \frac{1}{N \cdot G_T J_D(R) + J_P}.
\label{eq:multilink}
\end{equation}
When $N \cdot G_T J_D(R) \gg J_P$ (i.e., collective data information dominates the prior), $\bcrb_T^{(N)} \approx 1/(N \cdot G_T J_D) \propto 1/N$, verified in Fig.~\ref{fig:satellite}(b). The bound is attained by the FIM-weighted fusion estimator:
\begin{equation}
\hat{R}_{\mathrm{fuse}} = \frac{\sum_{n=1}^N J_{D,n} \hat{R}_n}{\sum_{n=1}^N J_{D,n}},
\label{eq:fuse}
\end{equation}
which achieves $\mathrm{Var}(\hat{R}_{\mathrm{fuse}}) = 1/\sum_n J_{D,n}$. The weights naturally emphasize low-elevation links with longer rain paths ($J_D \propto L_{\mathrm{eff}}^2$).

\begin{figure*}[!t]
\centering
\includegraphics[width=0.85\textwidth]{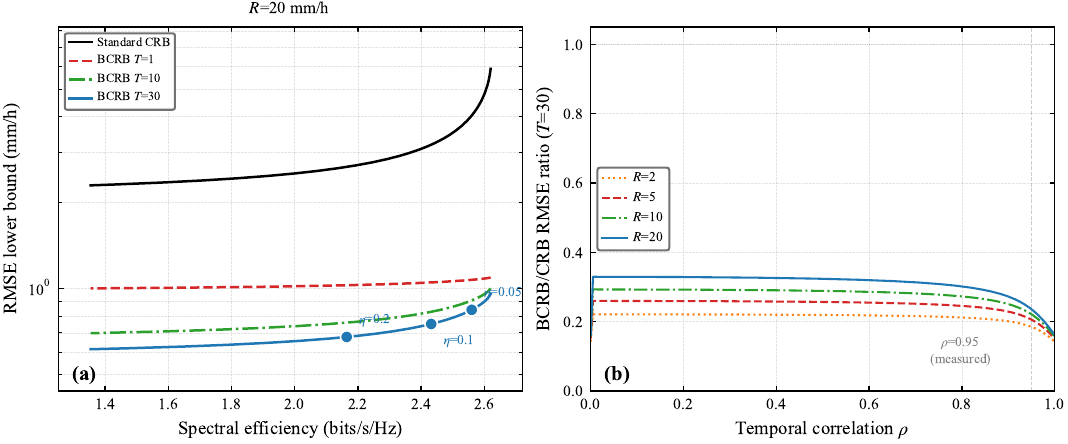}
\caption{(a)~Pareto frontier at $R=20\mmh$. (b)~BCRB/CRB ratio vs $\rho$.}
\label{fig:tradeoff}
\end{figure*}

\subsection{Dynamic LEO Slant Geometry}
\label{ssec:geometry}

The bounds derived in Sections~\ref{ssec:fim}--\ref{ssec:tradeoff} fix the elevation angle at the paper baseline $\theta_{\mathrm{el}} = 38^\circ$. LEO satellites, however, traverse the entire elevation range from $5^\circ$ to $90^\circ$ during a single pass, exposing the rain-sensing problem to a continuously varying geometric leverage. This subsection characterizes the dependence of $R_{\min}$ on $\theta_{\mathrm{el}}$ and identifies the sensing-optimal elevation in closed form.

Anchoring the effective rain path to the paper baseline, the slant geometry yields
\begin{equation}
L_{\mathrm{eff}}(\theta_{\mathrm{el}}) = L_0 \cdot \frac{\sin\theta_{\mathrm{base}}}{\sin\theta_{\mathrm{el}}},
\label{eq:Leff_geom}
\end{equation}
with $L_0 = 3$~km at $\theta_{\mathrm{base}} = 38^\circ$ from Section~\ref{ssec:atten}. The clear-sky per-subcarrier SNR scales with elevation through free-space path loss as $\bar\gamma(\theta_{\mathrm{el}}) = \gamma_0 \sin^2\theta_{\mathrm{el}}/\sin^2\theta_{\mathrm{base}}$, since the slant range satisfies $R_{\mathrm{sat}} \propto 1/\sin\theta_{\mathrm{el}}$. Substituting~\eqref{eq:Leff_geom} into~\eqref{eq:rmin} couples two opposing effects: longer slant paths increase the data Fisher information ($J_D \propto L_{\mathrm{eff}}^2$), while reduced SNR inflates $\sigma_n^2$ via~\eqref{eq:noise_model}.

\begin{lemma}[Sensing-optimal elevation, closed form]
\label{lem:sweet}
Under the simplified model in which gaseous absorption is neglected, the elevation $\theta_{\mathrm{el}}^*$ that minimizes $R_{\min}(\theta_{\mathrm{el}})$ satisfies
\begin{equation}
\sin\theta_{\mathrm{el}}^* = \sin\theta_{\mathrm{base}}\sqrt{\beta^*/\gamma_0},
\label{eq:sweet_closed}
\end{equation}
where $\beta^* = (x^* - 1)^{-1}$ and $x^* = 1 + \sqrt{1 + N_p\,\sigma_{\mathrm{sys}}^2/c_0}$ with $c_0 \triangleq (10/\ln 10)^2 \approx 18.86$.
\end{lemma}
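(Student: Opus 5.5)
The plan is to reduce $R_{\min}(\theta_{\mathrm{el}})$ to a one-variable function of the local SNR and minimize it by elementary calculus. Start from Proposition~\ref{prop:rmin}:
\begin{equation*}
R_{\min}^{2\bar{\alpha}} = \frac{\sigma_n^2}{K\bar{k}^2\bar{\alpha}^2 L_{\mathrm{eff}}^2}.
\end{equation*}
Gaseous absorption is neglected, so only $\sigma_n^2$ and $L_{\mathrm{eff}}$ depend on elevation. Since $t\mapsto t^{1/(2\bar{\alpha})}$ is strictly increasing, minimizing $R_{\min}$ is equivalent to minimizing $\sigma_n^2(\theta_{\mathrm{el}})/L_{\mathrm{eff}}^2(\theta_{\mathrm{el}})$. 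By~\eqref{eq:Leff_geom}, $L_{\mathrm{eff}}^{-2}\propto \sin^2\theta_{\mathrm{el}}$. By~\eqref{eq:noise_model} with the elevation-dependent SNR, $\sigma_n^2 = (c_0/N_p)(1+1/\bar\gamma)^2+\sigma_{\mathrm{sys}}^2$, where $\bar\gamma(\theta_{\mathrm{el}})=\gamma_0\sin^2\theta_{\mathrm{el}}/\sin^2\theta_{\mathrm{base}}$.

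Next, change variables to $\beta \triangleq \bar\gamma(\theta_{\mathrm{el}})$. Then $\sin^2\theta_{\mathrm{el}} = \beta\sin^2\theta_{\mathrm{base}}/\gamma_0$, and the objective becomes, up to a positive constant,
\begin{equation*}
f(\beta) = \beta\Bigl[\tfrac{c_0}{N_p}\bigl(1+\beta^{-1}\bigr)^2 + \sigma_{\mathrm{sys}}^2\Bigr], \qquad \beta>0.
\end{equation*}
The map $\theta_{\mathrm{el}}\mapsto\beta$ is monotone on $(0^\circ,90^\circ]$, so a minimizer in $\beta$ gives the minimizer in $\theta_{\mathrm{el}}$ by inverting: $\sin\theta_{\mathrm{el}}^*=\sin\theta_{\mathrm{base}}\sqrt{\beta^*/\gamma_0}$. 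This is exactly the form of~\eqref{eq:sweet_closed}. It remains to show that $\beta^*=(x^*-1)^{-1}$.

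For the minimization, set $s\triangleq N_p\sigma_{\mathrm{sys}}^2/c_0$ and differentiate. Dividing $f'(\beta)=0$ by $c_0/N_p$ gives
\begin{equation*}
(1+\beta^{-1})^2 - 2\beta^{-1}(1+\beta^{-1}) + s = 0 .
\end{equation*}
This factors as $(1+\beta^{-1})(1-\beta^{-1})+s=0$, that is, $1-\beta^{-2}+s=0$. Hence $\beta^* = (1+s)^{-1/2}$. Since $x^*-1=\sqrt{1+s}$, this is precisely $\beta^*=(x^*-1)^{-1}$.

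To confirm the stationary point is the global minimum, examine the endpoints. As $\beta\to0^+$, $f\sim c_0/(N_p\beta)\to\infty$. As $\beta\to\infty$, $f$ grows linearly. Moreover $f'$ has the sign of $1+s-\beta^{-2}$, which is increasing in $\beta$, so the root is unique and $f$ is unimodal.

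The main obstacle is interpretive rather than algebraic. Since $\beta^*<1$, the unconstrained optimum sits at sub-0-dB SNR, where the Gaussian dB-noise approximation behind~\eqref{eq:noise_model} (stated for $\gamma_k\gtrsim5\dB$) is outside its validity range. The optimum may also correspond to an elevation below the P.618 floor. I would therefore state the lemma as the stationary point of the simplified model. I would then note that, by unimodality of $f$, the constrained optimum over $[\theta_{\min},90^\circ]$ is the projection of $\theta_{\mathrm{el}}^*$ onto that interval. This projection is the clamping that presumably yields the $15^\circ$ sensing-optimal point quoted in the abstract.
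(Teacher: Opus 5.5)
Your proof is correct and follows the paper's route. You substitute the anchored $L_{\mathrm{eff}}$ and the clear-sky SNR into Proposition~\ref{prop:rmin}, reduce the problem to minimizing $\sigma_n^2\sin^2\theta_{\mathrm{el}}$, and solve the first-order condition: the paper writes it as a log-derivative in $u=\sin\theta_{\mathrm{el}}$ and then as the quadratic $x(x-2)=N_p\sigma_{\mathrm{sys}}^2/c_0$ in $x=1+1/\beta$, whereas you differentiate directly in $\beta$ and factor to $1+s-\beta^{-2}=0$. Your sign/unimodality argument also supplies the global-minimum justification (and hence the clamping to $15^\circ$), which the paper leaves implicit.
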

\begin{IEEEproof}
See Appendix~\ref{app:sweet}.
\end{IEEEproof}

For the parameters of Section~\ref{ssec:obs}, Lemma~\ref{lem:sweet} yields $\theta_{\mathrm{el}}^* \approx 9.7^\circ$. This closed-form optimum lies \emph{outside} the validity range of the ITU-R~P.618 reduction factor in~\eqref{eq:leff}, which becomes unreliable below $\theta_{\mathrm{el}} \approx 10^\circ$ and is conservative only above $\theta_{\mathrm{el}} \approx 15^\circ$. The geometric leverage available to LEO rain sensing therefore exceeds what current atmospheric propagation models can describe; the operational sweet spot is set by model validity rather than by the bound itself. Capping the analysis at the P.618-validity floor $\theta_{\mathrm{el}}^{\mathrm{cap}} = 15^\circ$ yields $R_{\min}(15^\circ) \approx 2.67\mmh$, a $1.58\times$ improvement over $R_{\min}(38^\circ) \approx 4.23\mmh$ at the paper baseline. The Starlink user-terminal minimum elevation of $20^\circ$~\cite{humphreys2023} provides a more conservative operational cap, yielding $R_{\min}(20^\circ) \approx 2.92\mmh$ and a $1.45\times$ improvement.

Fig.~\ref{fig:geometry}(a) shows $R_{\min}(\theta_{\mathrm{el}})$ for $5^\circ \leq \theta_{\mathrm{el}} \leq 90^\circ$. The U-shaped profile reflects the two opposing trends. The dashed reference curve, computed under a constant $\sigma_n = 1\dB$, isolates the geometric contribution and confirms that $R_{\min}$ would decrease monotonically toward the horizon if SNR collapse were absent. The shaded region marks the P.618 extrapolation zone where~\eqref{eq:leff} is no longer trustworthy.

\begin{figure*}[!t]
\centering
\includegraphics[width=0.85\textwidth]{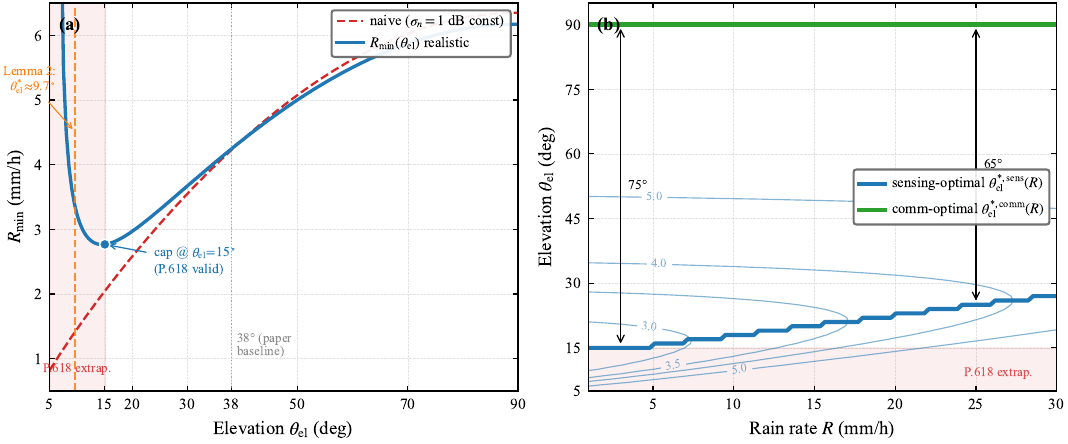}
\caption{Dynamic slant geometry. (a)~$R_{\min}$ vs elevation: the realistic model (solid blue) develops a U-shape from competing $L_{\mathrm{eff}}$ and SNR trends, while the constant-$\sigma_n$ reference (dashed red) decreases monotonically. The closed-form optimum from Lemma~\ref{lem:sweet} lies in the P.618 extrapolation zone; the operational cap at $\theta_{\mathrm{el}} = 15^\circ$ is shown. (b)~Joint $(R, \theta_{\mathrm{el}})$ plane: the sensing-optimal locus (blue) drifts from $15^\circ$ at $R = 3\mmh$ to $25^\circ$ at $R = 25\mmh$, while the communication-optimal locus (green) remains pinned at $90^\circ$.}
\label{fig:geometry}
\end{figure*}

Fig.~\ref{fig:geometry}(b) extends the analysis to the joint $(R, \theta_{\mathrm{el}})$ plane. The sensing-optimal elevation drifts upward from $\theta_{\mathrm{el}}^{*,\mathrm{sens}} = 15^\circ$ at $R = 3\mmh$ to $25^\circ$ at $R = 25\mmh$: heavier rain accelerates SNR collapse along long slant paths, pushing the optimum away from the horizon. The communication-optimal locus, in contrast, stays at $\theta_{\mathrm{el}}^{*,\mathrm{comm}} = 90^\circ$ for all $R$ in the considered range, since spectral efficiency is monotonically increasing in SNR. The angular gap between the two optima is $75^\circ$ at $R = 3\mmh$ and shrinks to $65^\circ$ at $R = 25\mmh$, but never closes within the operational rain regime of LEO microwave-link sensing.

\begin{remark}[Sensing--Communication Geometric Tradeoff]
\label{rem:geomanti}
Section~\ref{ssec:tbcrb} identified an anti-correlation along the rain-rate axis: heavier rain tightens the CRB but degrades $C(\eta)$. The dynamic geometry of Lemma~\ref{lem:sweet} induces a structurally distinct tradeoff along the orbital-time axis. The sensing-optimal elevation lies at the P.618-validity floor $\theta_{\mathrm{el}} \approx 15^\circ$, whereas the communication-optimal elevation remains at $90^\circ$ throughout a LEO pass. The two optima never coincide, with an angular gap of $65^\circ$ to $75^\circ$ across the operational rain regime $R \in [3, 25]\mmh$. Sensing precision and communication capacity therefore trade off twice: once across the rain field, and once across each pass. Elevation-aware adaptive pilot allocation $\eta^*(R, \theta_{\mathrm{el}})$ that exploits this two-dimensional structure is left for future work.
\end{remark}

\section{Weather-Adaptive ISAC Resource Allocation}
\label{sec:algorithm}

\begin{figure*}[!t]
\centering
\includegraphics[width=0.85\textwidth]{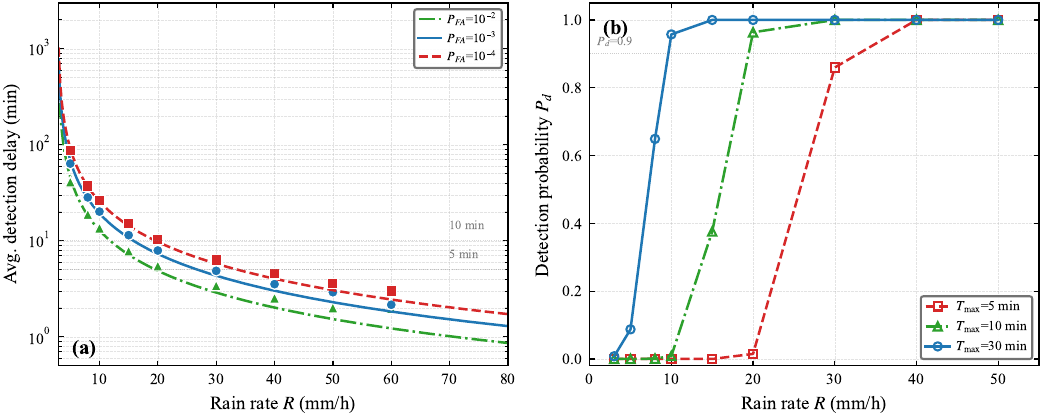}
\caption{CUSUM detection. (a)~Detection delay vs $R$; theory~\eqref{eq:add} (lines) and MC (markers). (b)~$P_d$ within 5/10/30~min.}
\label{fig:cusum}
\end{figure*}

\subsection{Constrained BCRB Minimization}
\label{ssec:eta_opt}

Given rain state $R$ and window $T$, the weather-adaptive allocation solves
\begin{equation}
\mathcal{P}: \min_\eta \; \bcrb_T(R, \eta) \quad \text{s.t.} \quad C(\eta, \bar{\gamma}(R)) \geq C_{\min}, \; \eta \in [\eta_{\min}, \eta_{\max}].
\label{eq:opt}
\end{equation}
Since $\bcrb_T \propto 1/\eta$ (monotonically decreasing) and $C(\eta)$ is concave with unique maximizer $\eta^*_{\mathrm{rate}}$, the rate constraint is active:
\begin{equation}
\eta^*(R) = \sup\{\eta \in [\eta_{\min}, \eta_{\max}] : C(\eta, \bar{\gamma}(R)) \geq C_{\min}\}.
\label{eq:etastar}
\end{equation}

\begin{proposition}[Structure of $\eta^*$]
\label{prop:mono}
The optimal pilot fraction $\eta^*(R)$ exhibits a three-regime structure:
\begin{enumerate}
\item \emph{Low rain ($R < R_{\mathrm{sat}}$):} $\eta^* = \eta_{\max}$, since $C(\eta_{\max}, \bar{\gamma}(R)) \geq C_{\min}$ and sensing is maximized.
\item \emph{Moderate rain ($R_{\mathrm{sat}} \leq R \leq R_{\mathrm{out}}$):} $\eta^*(R)$ is non-increasing in $R$, tracking the rate constraint boundary $C(\eta^*, \bar{\gamma}(R)) = C_{\min}$ as SNR declines.
\item \emph{Heavy rain ($R > R_{\mathrm{out}}$):} Link outage; $\eta^* = \eta^*_{\mathrm{rate}}(R)$, maximizing throughput.
\end{enumerate}
The thresholds $R_{\mathrm{sat}}$ and $R_{\mathrm{out}}$ satisfy $C(\eta_{\max}, \bar{\gamma}(R_{\mathrm{sat}})) = C_{\min}$ and $C(\eta^*_{\mathrm{rate}}, \bar{\gamma}(R_{\mathrm{out}})) = C_{\min}$, respectively.
\end{proposition}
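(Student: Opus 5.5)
The plan is to derive all three regimes from \eqref{eq:etastar}, using two monotonicity facts about $C(\eta,\bar\gamma)$ in \eqref{eq:se}. First, $C$ is strictly increasing in $\bar\gamma$ for fixed $\eta$. To see this, write the SINR argument as $s(\bar\gamma)=\bar\gamma^2 a/(1+\bar\gamma a)$ with $a=\eta N_{\mathrm{sym}}>0$. Differentiating gives $s'(\bar\gamma)=a\bar\gamma(2+\bar\gamma a)/(1+\bar\gamma a)^2>0$. Second, $\bar\gamma(R)=\gamma_0 10^{-A_{\mathrm{tot}}(R)/10}$ is strictly decreasing in $R$, because $A_{\mathrm{rain}}=kR^{\alpha}L_{\mathrm{eff}}$ is increasing. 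The P.618 factor $r_{\mathrm{eff}}$ depends on $R$, so this needs a check. I would verify that $\gamma_R r_{\mathrm{eff}}(\gamma_R)$ is increasing; it behaves like $\gamma_R/(c+0.78\sqrt{L_G\gamma_R/f})$, which grows like $\sqrt{\gamma_R}$. If needed I would simply assume $A_{\mathrm{tot}}$ is monotone in $R$, as the paper does implicitly. Together these give that $R\mapsto C(\eta,\bar\gamma(R))$ is strictly decreasing for each fixed $\eta$.

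Next I would use the shape of $C$ in $\eta$. The paper states that $C(\cdot,\bar\gamma)$ is concave (unimodal suffices) with maximizer $\eta^*_{\mathrm{rate}}(\bar\gamma)$. I would assume $\eta_{\max}\ge\eta^*_{\mathrm{rate}}$, which holds since $\eta^*_{\mathrm{rate}}\approx0.018$. Then on $[\eta^*_{\mathrm{rate}},\eta_{\max}]$ the function $C$ is decreasing in $\eta$, so the feasible set $\{\eta: C\ge C_{\min}\}$ is an interval containing $\eta^*_{\mathrm{rate}}$ whenever it is nonempty. Its supremum is therefore the right endpoint of that interval. The three regimes then follow from where $C_{\min}$ sits.

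\emph{Regime 1.} Define $R_{\mathrm{sat}}$ by $C(\eta_{\max},\bar\gamma(R_{\mathrm{sat}}))=C_{\min}$. For $R<R_{\mathrm{sat}}$, monotonicity in $R$ gives $C(\eta_{\max},\bar\gamma(R))>C_{\min}$, so $\eta_{\max}$ is feasible and $\eta^*=\eta_{\max}$.

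\emph{Regime 2.} For $R_{\mathrm{sat}}\le R\le R_{\mathrm{out}}$, we have $C(\eta_{\max})\le C_{\min}\le C(\eta^*_{\mathrm{rate}})$. By continuity and strict decrease on $[\eta^*_{\mathrm{rate}},\eta_{\max}]$, there is a unique root $\eta^*(R)$ of $C(\eta,\bar\gamma(R))=C_{\min}$ in that interval. To show $\eta^*$ is non-increasing, take $R_1<R_2$. Then $C(\eta^*(R_1),\bar\gamma(R_2))<C(\eta^*(R_1),\bar\gamma(R_1))=C_{\min}$. Since $C$ is decreasing in $\eta$ to the right of the peak, the root at $R_2$ must lie to the left, so $\eta^*(R_2)\le\eta^*(R_1)$. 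The implicit function theorem gives the same sign, $d\eta^*/dR=-(\partial_R C)/(\partial_\eta C)\le 0$, with numerator negative and denominator negative.

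\emph{Regime 3.} For $R>R_{\mathrm{out}}$, even $\max_\eta C=C(\eta^*_{\mathrm{rate}})$ falls below $C_{\min}$, so $\mathcal P$ is infeasible. This is where the "outage" convention comes in: the natural fallback is to maximize throughput, which gives $\eta^*=\eta^*_{\mathrm{rate}}(R)$.

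The main obstacle I expect is rigor on the shape of $C$ in $\eta$, i.e., concavity or at least unimodality of \eqref{eq:se}. The paper asserts this, and I would cite it. If a check is wanted, the route is to note that $\log_2(1+s(\eta))$ is concave increasing, since $s$ is concave in $\eta$, and that multiplying by the linear decreasing factor $(1-\eta)$ preserves unimodality. A secondary subtlety is making sure $R_{\mathrm{sat}}\le R_{\mathrm{out}}$. This follows because $C(\eta_{\max},\cdot)\le C(\eta^*_{\mathrm{rate}},\cdot)$ pointwise together with monotonicity in $R$. The edge cases, where $\eta_{\min}$ binds or the thresholds do not exist, I would handle by the conventions $R_{\mathrm{sat}}=0$ or $R_{\mathrm{out}}=\infty$.
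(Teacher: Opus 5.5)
Your proposal is correct and follows essentially the same route as the paper's proof. The regimes are read off from the supremum characterization~\eqref{eq:etastar}, and Regime~2 is settled by implicit differentiation using $\partial C/\partial\bar{\gamma}>0$, $d\bar{\gamma}/dR<0$, and $\partial C/\partial\eta<0$ to the right of the throughput peak; your extras ($s'(\bar{\gamma})$, the $r_{\mathrm{eff}}$ check, $R_{\mathrm{sat}}\le R_{\mathrm{out}}$, and flagging Regime~3 as a fallback convention) add rigor the paper omits.

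One small repair is needed in your comparison argument. Because $\eta^*_{\mathrm{rate}}$ moves with $R$, that argument tacitly needs $\eta^*(R_1)\ge\eta^*_{\mathrm{rate}}(R_2)$. This follows from the nesting $\{\eta : C(\eta,\bar{\gamma}(R_2))\ge C_{\min}\}\subseteq\{\eta : C(\eta,\bar{\gamma}(R_1))\ge C_{\min}\}$ for $R_1<R_2$, and that nesting by itself already gives $\eta^*(R_2)\le\eta^*(R_1)$ directly from the supremum, with no appeal to where the peak lies.
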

\begin{IEEEproof}
See Appendix~\ref{app:mono}.
\end{IEEEproof}

The three-regime structure reflects a fundamental tradeoff: at low rain, excess SNR margin permits maximum sensing; as rain intensifies, the system progressively surrenders sensing resources to maintain communication; beyond outage, it reverts to throughput-optimal operation.

\begin{proposition}[High-SNR asymptotic form]
\label{prop:asymp}
In the high-SNR regime ($\bar{\gamma} \gg 1$):
\begin{equation}
\eta^* \approx 1 - \frac{C_{\min}}{\log_2(1+\bar{\gamma})}.
\label{eq:eta_high}
\end{equation}
\end{proposition}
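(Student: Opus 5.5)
We start from the characterization~\eqref{eq:etastar}: in the throughput-tracking regime of Proposition~\ref{prop:mono}, the rate constraint is active. Hence $\eta^*$ is the larger root of $C(\eta,\bar{\gamma}) = C_{\min}$, with $C$ given by~\eqref{eq:se}. The plan is to simplify the argument of the logarithm in~\eqref{eq:se} when $\bar{\gamma}\gg 1$ and then solve the resulting scalar equation for $\eta$ in closed form.

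\textbf{Step 1: simplify the effective SNR.} Write the effective SNR as
\[
\gamma_{\mathrm{eff}}(\eta)=\frac{\bar{\gamma}^2\eta N_{\mathrm{sym}}}{1+\bar{\gamma}\eta N_{\mathrm{sym}}}
= \bar{\gamma}\Bigl(1+\frac{1}{\bar{\gamma}\eta N_{\mathrm{sym}}}\Bigr)^{-1}.
\]
In the relevant range $\eta\ge\eta_{\min}$ we have $\bar{\gamma}\eta N_{\mathrm{sym}}\gg 1$; for example, with $N_{\mathrm{sym}}=302$ and $\bar{\gamma}\gg1$ the product already exceeds about $30\bar{\gamma}$. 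Therefore $\gamma_{\mathrm{eff}} = \bar{\gamma}\,(1 - O(1/(\bar{\gamma}\eta N_{\mathrm{sym}})))$.

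\textbf{Step 2: linearize the logarithm.} Since $\bar{\gamma}\gg1$,
\[
\log_2(1+\gamma_{\mathrm{eff}})=\log_2(1+\bar{\gamma}) - O\bigl(1/(\bar{\gamma}\eta N_{\mathrm{sym}}\ln 2)\bigr).
\]
This shows that the channel-estimation penalty enters only as a vanishing additive correction. To leading order, $C(\eta)\approx(1-\eta)\log_2(1+\bar{\gamma})$, which is affine and decreasing in $\eta$.

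\textbf{Step 3: solve the active constraint.} Setting $(1-\eta^*)\log_2(1+\bar{\gamma})=C_{\min}$ and solving gives~\eqref{eq:eta_high}. We also check that this is the supremum in~\eqref{eq:etastar}. In the leading-order model $C$ is strictly decreasing on $[\eta_{\min},\eta_{\max}]$, so the feasible set is an interval $[\eta_{\min},\eta^*]$. The optimum is then either this root or is clipped to $\eta_{\max}$, which reproduces regime 1 of Proposition~\ref{prop:mono}.

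\textbf{Main obstacle: the size of the neglected term.} The hardest step is making the approximation quantitative. Perturbing the equation gives a first-order correction
\[
\delta\eta \approx -\frac{(1-\eta^*)}{\bar{\gamma}\eta^* N_{\mathrm{sym}}\ln 2\,\log_2(1+\bar{\gamma})}.
\]
This is $O(1/(\bar{\gamma}N_{\mathrm{sym}}\log\bar{\gamma}))$ and hence negligible in the high-SNR regime. The approximation breaks down only near the boundary where $\eta^*\to\eta^*_{\mathrm{rate}}\approx0.018$, which lies in the outage regime of Proposition~\ref{prop:mono} and is excluded from the statement.
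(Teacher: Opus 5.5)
Your proposal is correct and follows the same route as the paper's proof. Replace the effective SNR $\bar{\gamma}^2\eta N_{\mathrm{sym}}/(1+\bar{\gamma}\eta N_{\mathrm{sym}})$ by $\bar{\gamma}$, so that $C(\eta)\approx(1-\eta)\log_2(1+\bar{\gamma})$, then solve the active constraint $C=C_{\min}$ for $\eta^*$. Your first-order correction $\delta\eta$ and the supremum/clipping check go beyond the paper's one-line argument and are sound, with one small slip: the bound $\bar{\gamma}\eta N_{\mathrm{sym}}\ge 30\bar{\gamma}$ needs $\eta\ge 0.1$, which is not given, but $\eta N_{\mathrm{sym}}=N_p\ge 1$ already gives $\bar{\gamma}\eta N_{\mathrm{sym}}\ge\bar{\gamma}\gg 1$.
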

\begin{IEEEproof}
See Appendix~\ref{app:asymp}.
\end{IEEEproof}

\begin{remark}[Asymptotic behavior]
\label{rem:asymp}
At high SNR, $\eta^* \to 0$: ample SNR permits minimal pilots. At low SNR, $\eta^*$ approaches $\eta_{\max}$ as the system allocates maximum pilots to maintain CSI quality under degraded channel conditions.
\end{remark}

The bisection algorithm exploits monotonicity on $[\eta^*_{\mathrm{rate}}, \eta_{\max}]$, converging in $\lceil\log_2((\eta_{\max}-\eta_{\min})/\varepsilon)\rceil$ iterations.

\begin{figure*}[!t]
\centering
\includegraphics[width=0.85\textwidth]{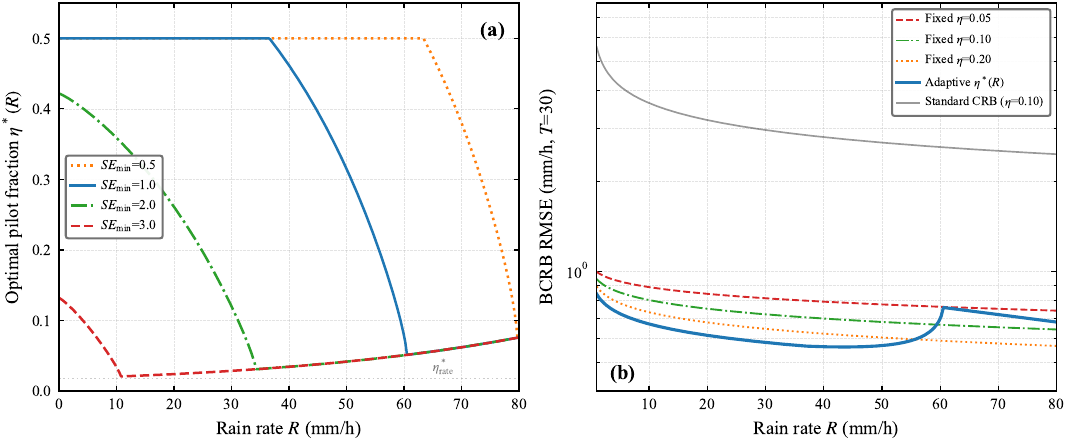}
\caption{Adaptive allocation. (a)~$\eta^*(R)$ for four $C_{\min}$. (b)~BCRB comparison: adaptive $\eta^*$ achieves the lowest BCRB among all schemes that satisfy the $C_{\min}$ constraint; fixed baselines ignore this constraint.}

\label{fig:eta}
\end{figure*}

\subsection{Rain Rate Estimation}
\label{ssec:mle}

Given the observed attenuation vector $\hat{\vect{a}} = [\hat{A}_1, \ldots, \hat{A}_K]^T$ from~\eqref{eq:obs}, the MLE for $R$ under known side information maximizes the log-likelihood~\eqref{eq:loglik}:
\begin{equation}
\hat{R}_{\mathrm{MLE}} = \arg\min_{R > 0} \sum_{k=1}^K \bigl(\hat{A}_k - k(f_k) R^{\alpha(f_k)} L_{\mathrm{eff}} - c_k\bigr)^2,
\label{eq:mle}
\end{equation}
where $c_k = \gamma_g(f_k) L_g + K_l(f_k) M_c L_c + G$ collects the known nuisance terms. The nonlinear least-squares problem is solved via Newton--Raphson on the objective $Q(R)$, initialized by the single-frequency closed-form inversion $R^{(0)} = (\bar{A}/(k_c L_{\mathrm{eff}}))^{1/\alpha_c}$ at the band center $f_c$. For $R > 0$ and $\hat{A}_k > 0$, the Gauss--Newton approximation of the Hessian is positive definite, ensuring local convergence; in practice $I \leq 5$ iterations suffice for relative tolerance $10^{-6}$ with no convergence failures across $10{,}000$ Monte Carlo trials.

\begin{remark}[MLE efficiency]
\label{rem:mle_eff}
The MLE is asymptotically efficient: $\mathrm{Var}(\hat{R}_{\mathrm{MLE}}) \to \crb(R)$ as $K \to \infty$ or $\sigma_n^2 \to 0$~\cite{vantrees1968}. The MLE attains the standard CRB in the high-SNR regime; at low $R$ the estimator exhibits threshold behavior and the Bayesian MAP bound becomes the relevant benchmark.
\end{remark}

The MAP estimator incorporates the log-normal prior~\eqref{eq:lognormal} via a penalized objective:
\begin{equation}
\hat{R}_{\mathrm{MAP}} = \arg\min_{R > 0}\; \frac{1}{2\sigma_n^2}\sum_{k=1}^K (\hat{A}_k - A_k(R))^2 + \frac{(\ln R - \mu_{\ln})^2}{2\sigma_{\ln}^2} + \ln R.
\label{eq:map}
\end{equation}
The log-normal penalty regularizes the estimate toward $\bar{R}$ when the data is weak. The MAP gradient acquires an additional prior term:
\begin{equation}
Q'_{\mathrm{MAP}}(R) = Q'(R) + 2\sigma_n^2\!\left(\frac{\ln R - \mu_{\ln}}{R\sigma_{\ln}^2} + \frac{1}{R}\right),
\label{eq:map_grad}
\end{equation}
solved by the same Newton iteration. In the prior-dominant regime $R \lesssim 3\mmh$, where the standard CRB is loose, the MAP estimator regularizes toward the log-normal prior mean and substantially outperforms the unregularized MLE; in the data-dominant regime $R \gtrsim 5\mmh$, the MLE is asymptotically efficient~\cite{vantrees1968} and the prior contribution is negligible.

\subsection{Rain Onset Detection}
\label{ssec:cusum}

The CUSUM detector~\cite{page1954} tests $H_0: A(t) \sim \mathcal{N}(0, \sigma_n^2)$ vs $H_1: A(t) \sim \mathcal{N}(\mu_d, \sigma_n^2)$, where $\mu_d = k R_d^\alpha L_{\mathrm{eff}}$ is the expected attenuation at a design rain rate $R_d$. We set $R_d = 5\mmh$ (the light-to-moderate rain boundary), which balances detection sensitivity against false-alarm rate; the detector is robust to the choice of $R_d$ within $2$--$10\mmh$ (delay varies by $<2\times$). The CUSUM increment subtracts half the design-point attenuation:
\begin{equation}
\Lambda(t) = A(t) - \frac{\mu_d}{2},
\label{eq:llr}
\end{equation}
so that $\E[\Lambda|H_0] = -\mu_d/2 < 0$ (no accumulation under clear sky) and $\E[\Lambda|H_1] = \mu_R - \mu_d/2 > 0$ for $R > R_d/2$ (positive drift during rain), where $\mu_R = k R^\alpha L_{\mathrm{eff}}$ denotes the attenuation at the true rain rate. The recursive statistic
\begin{equation}
S(t) = \max(0, S(t-1) + \Lambda(t)), \quad \text{alarm if } S(t) > h,
\label{eq:cusum}
\end{equation}
with threshold $h$ calibrated to $P_{\mathrm{FA}}$.

\begin{proposition}[Detection delay]
\label{prop:delay}
Under $H_1$ at rain rate $R$ with Wald's approximation~\cite{lorden1971}, the average detection delay (ADD) satisfies
\begin{equation}
\mathrm{ADD}(R) \approx \frac{h}{\mu_R - \mu_d/2} = \frac{\sigma_n^2 \ln(1/P_{\mathrm{FA}})}{\mu_d(\mu_R - \mu_d/2)},
\label{eq:add}
\end{equation}
where $\mu_R = k R^\alpha L_{\mathrm{eff}}$ is the mean attenuation at rain rate $R$.
\end{proposition}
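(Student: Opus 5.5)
We obtain the detection-delay expression in two steps. First, Wald's identity gives the delay in terms of the threshold $h$ and the post-change drift of the CUSUM increment. Second, the false-alarm calibration of $h$ is expressed through the Gaussian log-likelihood ratio. The statistic~\eqref{eq:cusum} is a random walk reflected at zero with i.i.d.\ increments $\Lambda(t)$. Under $H_1$ at rain rate $R$, these increments have drift $\E[\Lambda\mid H_1] = \mu_R - \mu_d/2$, which is positive for $R > R_d/2$ (more precisely, wherever $\mu_R > \mu_d/2$). We consider the worst case, in which the change occurs when $S=0$; this is Lorden's criterion~\cite{lorden1971}. The alarm time $\tau$ is then the first passage of the partial sum $\sum_{t\le\tau}\Lambda(t)$ above $h$, because resets to zero become asymptotically rare under positive drift. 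Wald's identity gives $\E\big[\sum_{t\le\tau}\Lambda(t)\big] = \E[\tau]\,(\mu_R - \mu_d/2)$. Neglecting the overshoot of $S(\tau)$ beyond $h$, which is Wald's approximation and is accurate when $h$ is large relative to $\sigma_n$, we replace the left side by $h$. This yields the first equality, $\mathrm{ADD}(R)\approx h/(\mu_R-\mu_d/2)$.

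Next we calibrate $h$. The exact Gaussian log-likelihood ratio between $H_1$ and $H_0$ is
\begin{equation*}
\ell(t) = \frac{\mu_d}{\sigma_n^2}\Big(A(t) - \frac{\mu_d}{2}\Big) = \frac{\mu_d}{\sigma_n^2}\,\Lambda(t),
\end{equation*}
so the paper's statistic is the standard Page CUSUM scaled by $\sigma_n^2/\mu_d$. The standard Page/Lorden bound for the LLR-domain threshold $h_\ell$ states that the false-alarm probability per run, equivalently the reciprocal of the ARL to false alarm, satisfies $P_{\mathrm{FA}}\lesssim e^{-h_\ell}$. We therefore set $h_\ell=\ln(1/P_{\mathrm{FA}})$. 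Converting back to the attenuation domain gives $h = (\sigma_n^2/\mu_d)\ln(1/P_{\mathrm{FA}})$. Substituting this into the Wald expression produces the second equality in~\eqref{eq:add}.

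The main obstacle is justifying the threshold calibration, not the Wald step. The inequality $P_{\mathrm{FA}}\le e^{-h_\ell}$ follows from the fact that $e^{\sum\ell}$ is a martingale under $H_0$, combined with optional stopping (Ville's inequality). This argument applies only when the increments are true LLRs, which is why the rescaling by $\mu_d/\sigma_n^2$ is the crucial observation. We also need the rain to be mis-specified relative to the design point: the true mean $\mu_R$ need not equal $\mu_d$. This affects only the drift, not the $H_0$ calibration, so the formula remains valid for any $R$ with positive drift. Finally, the quasi-static assumption that $R$ is constant during detection must be stated explicitly, since~\eqref{eq:gm} lets $R$ vary on the timescale of the delay. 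We would note that the approximation is intended for the regime where the ADD is short compared with the correlation time $1/(1-\rho)\approx 20$~min, consistent with the reported sub-10-min delays.
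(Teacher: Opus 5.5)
Your proposal is correct and follows essentially the same route as the paper. Wald's identity with post-change drift $\mu_R-\mu_d/2$ (neglecting overshoot) gives $\mathrm{ADD}\approx h/(\mu_R-\mu_d/2)$, and substituting $h=(\sigma_n^2/\mu_d)\ln(1/P_{\mathrm{FA}})$, which the paper takes from $\mathrm{ARL}_0\approx\exp(h\mu_d/\sigma_n^2)$, yields the second equality. Your extra observation that $(\mu_d/\sigma_n^2)\Lambda(t)$ is the exact Gaussian log-likelihood ratio, so that the martingale/Lorden argument gives $\mathrm{ARL}_0\ge e^{h\mu_d/\sigma_n^2}$, supplies a justification for this threshold calibration that the paper only asserts; your positivity condition $\mu_R>\mu_d/2$ (i.e., $R>2^{-1/\alpha}R_d$) is also more precise than the paper's $R>R_d/2$.
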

\begin{IEEEproof}
See Appendix~\ref{app:delay}.
\end{IEEEproof}

The threshold $h = (\sigma_n^2/\mu_d)\ln(1/P_{\mathrm{FA}})$ follows from $\mathrm{ARL}_0 \approx \exp(h\mu_d/\sigma_n^2)$ under $H_0$. The detection probability within a window $T_{\max}$ is approximated as $P_d(R, T_{\max}) = 1 - \exp(-T_{\max}/\mathrm{ADD}(R))$, exact for i.i.d.\ observations and conservative for the CUSUM~\cite{moustakides1986}. At $P_{\mathrm{FA}}=10^{-3}$, $R=20\mmh$ is detected within $8$~min and $R=50\mmh$ within $3$~min (Fig.~\ref{fig:cusum}(a)); MC simulations ($5{,}000$ trials) confirm the Wald approximation within a factor of $1.3\times$. Fig.~\ref{fig:cusum}(b) confirms $P_d > 0.9$ for $R \geq 10\mmh$ within $30$~min and $R \geq 20\mmh$ within $10$~min.

\section{Numerical Results}
\label{sec:results}

\begin{figure*}[!t]
\centering
\includegraphics[width=0.85\textwidth]{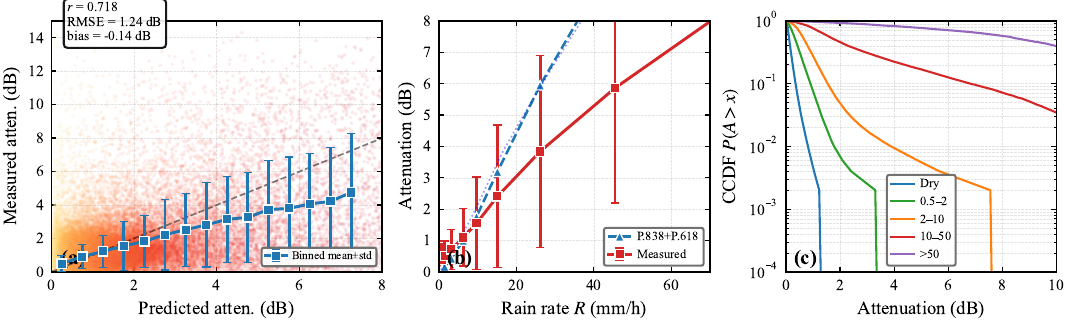}
\caption{215-link radar validation. (a)~Predicted vs measured attenuation. (b)~Binned attenuation vs $R$. (c)~CCDF by rain category.}
\label{fig:radar}
\end{figure*}

\begin{figure*}[!t]
\centering
\includegraphics[width=0.85\textwidth]{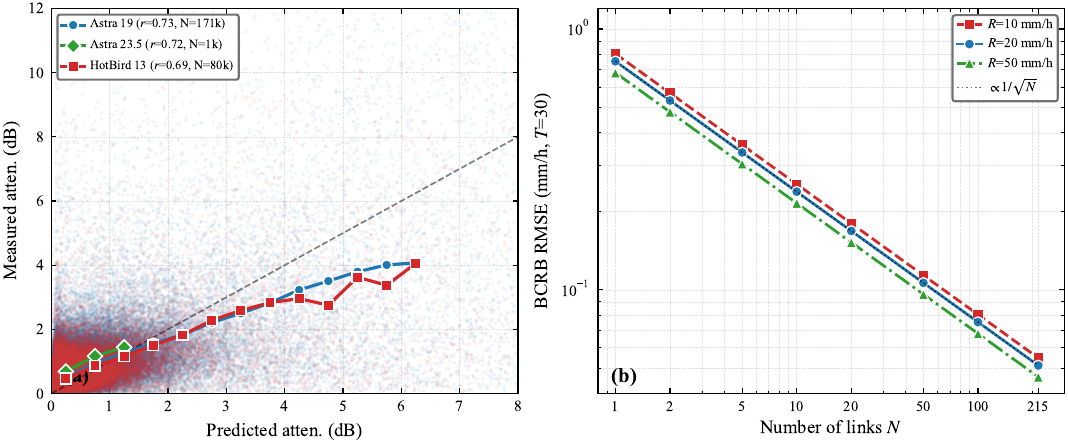}
\caption{(a)~Per-satellite scatter. (b)~BCRB RMSE vs $N$ ($T=30$); $1/\sqrt{N}$ scaling.}
\label{fig:satellite}
\end{figure*}

\begin{figure*}[!t]
\centering
\includegraphics[width=0.85\textwidth]{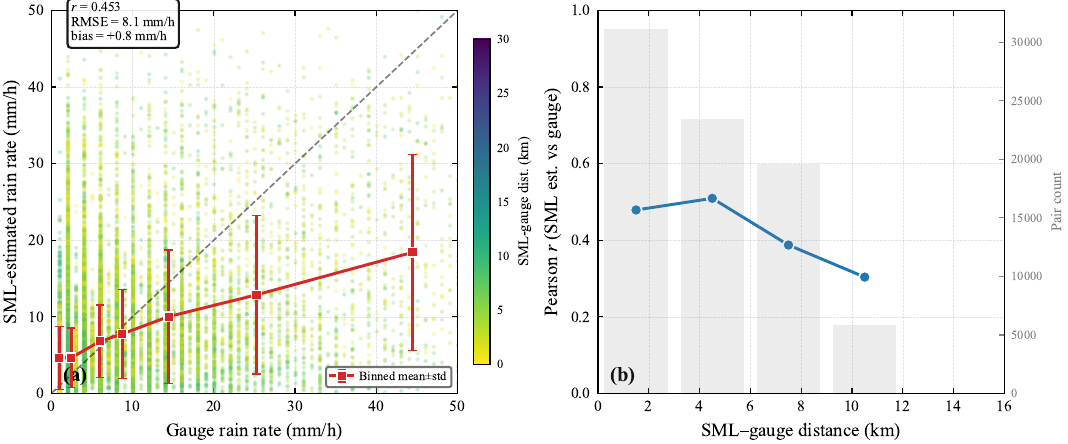}
\caption{Gauge validation. (a)~SML vs gauge. (b)~$r$ vs distance; decorrelation at ${\sim}5$~km.}
\label{fig:gauge}
\end{figure*}

\begin{figure*}[!t]
\centering
\includegraphics[width=0.85\textwidth]{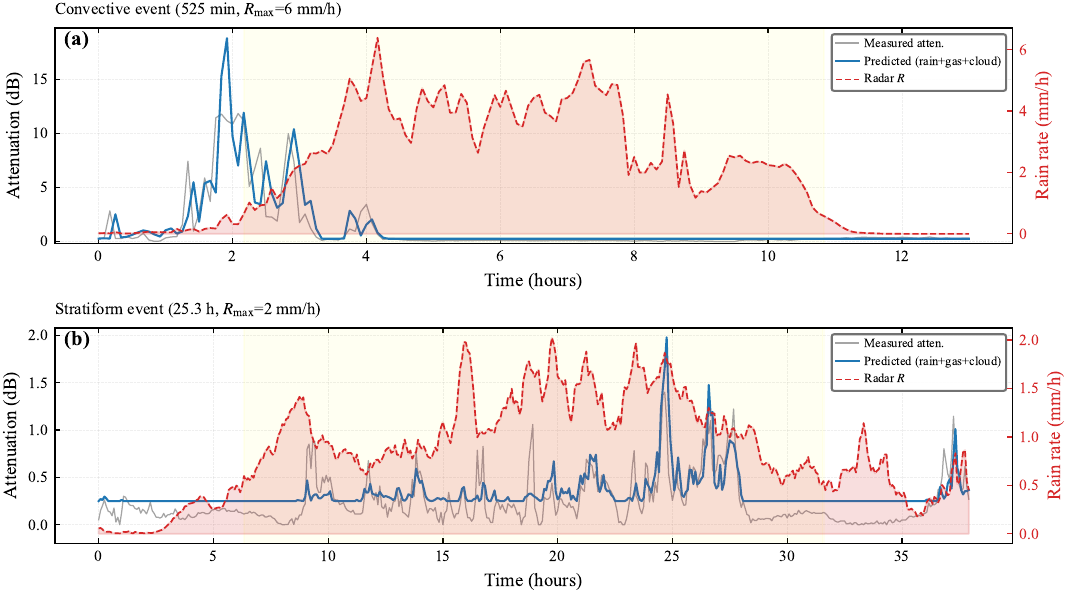}
\caption{Representative per-event time series. (a)~Convective event: predicted attenuation tracks the rapid onset and peak. (b)~Stratiform event: predicted attenuation with gas+cloud baseline follows the measured envelope. Event duration and $R_{\max}$ are annotated in each panel title.}
\label{fig:events}
\end{figure*}

This section validates theory and algorithm using the OpenSat4Weather campaign~\cite{nebuloni2025}.

\subsection{Simulation Setup}
\label{ssec:setup}

Table~\ref{tab:params} lists parameters. Starlink Ku-band parameters follow~\cite{humphreys2023}. All CRBs use corrected P.838-3 coefficients (Remark~\ref{rem:p838}). BCRB parameters ($\bar{R}=5.2\mmh$, $c_v=1.05$, $\rho=0.95$) are from the dataset.

\begin{table}[!t]
\centering
\caption{System and Dataset Parameters}
\label{tab:params}
\begin{tabular}{lll}
\toprule
\textbf{Parameter} & \textbf{Value} & \textbf{Source} \\
\midrule
Ku-band range & 10.7--12.7~GHz & ITU \\
$\bar{k}$, $\bar{\alpha}$ (H-pol) & 0.022, 1.19 & P.838-3 \\
$N_{\mathrm{sym}}$, BW & 302, 240~MHz & \cite{humphreys2023} \\
EIRP, Rx gain & 36.7~dBW, 34~dBi & \cite{humphreys2023} \\
$\sigma_n$, $\gamma_0$ & 1~dB, 10~dB & Measured \\
$h_R$, $\theta_{\mathrm{el}}$, $L_{\mathrm{eff}}$ & 3.1~km, $38^\circ$, 3~km & P.839/618 \\
$L_{\mathrm{eff}}$ @ $\theta_{\mathrm{el}}=15^\circ$ / $20^\circ$ & 7.1 / 5.4~km & Eq.~\eqref{eq:Leff_geom} \\
\midrule
SMLs / Gauges & 215 / 113 & Dataset \\
Satellites & Astra 19/23, HotBird & Dataset \\
Duration & Aug--Dec 2022, 1-min & Dataset \\
\bottomrule
\end{tabular}
\end{table}

\subsection{CRB and BCRB Analysis}
\label{ssec:res_crb}

Fig.~\ref{fig:crb_fund}(a) confirms $R_{\min} \approx 4.3\mmh$ at the $38^\circ$ baseline and the $2\times$ wideband gain from $K=5$. Adding Ka-band yields a further $1.5\times$ improvement. Fig.~\ref{fig:crb_fund}(b) validates the identifiability hierarchy (Proposition~\ref{prop:ident}), with relative CRB degrading by five orders of magnitude from the fully-informed to the fully-unknown case.

Fig.~\ref{fig:bcrb}(a) compares the BCRB across rain rates. The standard CRB serves as upper envelope. With $T=30$, the BCRB RMSE at $R=20\mmh$ is approximately $0.75\mmh$, a $4.2\times$ improvement over the single-snapshot CRB ($3.19\mmh$). Fig.~\ref{fig:bcrb}(b) confirms saturation at $G_\infty = 10.26$. The 95\% saturation at $T=30$~min aligns with the meteorological observation that stratiform rain events exhibit correlation lengths of 20--40~min.

Fig.~\ref{fig:tradeoff}(a) shows the Pareto frontier. The BCRB with $T=30$ achieves the CRB's $\eta=0.20$ performance at $\eta=0.05$, freeing 15\% of frame resources for data. Fig.~\ref{fig:tradeoff}(b) quantifies sensitivity to $\rho$: the BCRB/CRB ratio remains below 0.5 for $\rho > 0.85$, indicating that the temporal gain is robust to moderate errors in $\rho$ estimation.

\subsection{Algorithm Performance}
\label{ssec:res_algo}

Fig.~\ref{fig:eta}(a) shows the three-regime structure of $\eta^*(R)$ (Proposition~\ref{prop:mono}). For $C_{\min}=1.0$~bit/s/Hz, $\eta^* = 0.50$ below $R \approx 35\mmh$ (Regime~1: full sensing), decreases to $\eta \approx 0.05$ at $R \approx 65\mmh$ (Regime~2: sensing yields to communication), and transitions to $\eta^*_{\mathrm{rate}}$ at outage (Regime~3).

Fig.~\ref{fig:eta}(b) compares the BCRB under adaptive $\eta^*$ against three fixed-$\eta$ baselines ($0.05, 0.10, 0.20$). At low-to-moderate rain ($R \leq 50\mmh$), the adaptive scheme achieves the lowest BCRB among all schemes satisfying $C_{\min}$: it matches $\eta=0.20$ sensing precision while maintaining the communication rate guarantee. The improvement over the best fixed baseline exceeds 30\% at $R=10\mmh$. Beyond $R \approx 50\mmh$, the adaptive scheme progressively reduces $\eta$ to preserve throughput (Regimes~2--3), accepting higher BCRB than fixed $\eta = 0.20$. This is by design: the fixed-$\eta$ baselines ignore the $C_{\min}$ constraint and would violate the minimum spectral efficiency at high rain rates, whereas the adaptive scheme enforces it.

Fig.~\ref{fig:cusum}(a) validates CUSUM detection delay. Theory (lines) and MC (markers, $5{,}000$ trials) agree within a factor of $1.3\times$ in ADD across $R \in [5, 50]\mmh$. The Wald approximation is slightly optimistic (lower than MC) because it neglects the overshoot of the CUSUM statistic above the threshold~\cite{lorden1971,moustakides1986}; this effect is most pronounced at high $R$ where the per-increment drift is large relative to $h$. The fixed-design CUSUM ($R_d = 5\mmh$) achieves ADD~$< 10$~min for $R \geq 20\mmh$ and ADD~$< 5$~min for $R \geq 30\mmh$. Fig.~\ref{fig:cusum}(b) confirms $P_d > 0.9$ for $R \geq 15\mmh$ within 30~min and $R \geq 30\mmh$ within 10~min, sufficient for meteorological applications where 10-min temporal resolution is standard.

\subsection{Radar Validation (215 Links)}
\label{ssec:res_radar}

Fig.~\ref{fig:radar}(a) compares P.838+P.618 predicted attenuation against measured across 9.4~million pairs: $r=0.718$, RMSE$=1.24\dB$, bias$=-0.14\dB$. The near-zero bias confirms the corrected P.838 coefficients. The scatter increases at high attenuation ($>5\dB$) due to sub-path-scale rain cells that violate the uniform-rain assumption. Fig.~\ref{fig:radar}(b) shows binned attenuation versus rain rate; the theoretical P.838+P.618 curve closely matches the measured mean for $R \lesssim 15\mmh$ and modestly overestimates at higher rates, consistent with the path-averaging bias of the uniform-rain assumption when sub-path-scale convective cells dominate. Fig.~\ref{fig:radar}(c) presents the attenuation CCDF stratified by rain rate category, confirming that the model captures the statistical tail behavior. The radar-SML correlation ($r = 0.72$) is consistent with CML validation benchmarks in~\cite{graf2020}.

\subsection{Per-Satellite and Multi-Link Analysis}
\label{ssec:res_sat}

Fig.~\ref{fig:satellite}(a) shows per-satellite consistency: Astra~19 ($r=0.73$, 171k pairs), Astra~23.5 ($r=0.72$, 1k), HotBird~13 ($r=0.69$, 80k). The $17^\circ$ azimuthal spread provides spatial diversity while the consistent correlations confirm model generalizability across orbital positions.

Fig.~\ref{fig:satellite}(b) validates the $1/\sqrt{N}$ multi-link BCRB scaling from~\eqref{eq:multilink}. With $N=215$ links and $T=30$, the theoretical BCRB RMSE at $R=20\mmh$ reaches approximately $0.07\mmh$, more than an order of magnitude below the single-link value.

\subsection{Gauge Validation and Per-Event Case Studies}
\label{ssec:res_gauge}

Fig.~\ref{fig:gauge}(a) compares SML-estimated rain rate against 113 gauges: $r=0.45$, RMSE$=8.1\mmh$, bias$=+0.8\mmh$; the lower $r$ versus radar ($0.72$) reflects the well-documented path-averaged versus point measurement mismatch~\cite{overeem2013}. Fig.~\ref{fig:gauge}(b) shows decorrelation at ${\sim}5$~km, matching $L_{\mathrm{eff}}$. Fig.~\ref{fig:events} presents two representative events: a convective episode in (a) where predicted attenuation tracks the rapid onset and peak, validating the P.838 model under intense rain; and a stratiform episode in (b) where the slow temporal variation validates the Gauss--Markov model and corresponds to the regime where the temporal BCRB achieves maximum gain.
\section{Conclusion}
\label{sec:conclusion}

This paper derived the Bayesian CRB for rain rate estimation from Ku-band LEO OFDM downlinks. For a single link at the $38^\circ$ reference elevation, the standard CRB yields $R_{\min} \approx 4.3\mmh$; the BCRB reduces this to $1.1\mmh$ (single snapshot) and $0.95\mmh$ ($T=30$~min, $\rho=0.95$), and multi-link fusion across $N=215$ links further lowers the operating-point RMSE lower bound at $R = 20\mmh$ to approximately $0.07\mmh$, attainable in principle by FIM-weighted fusion under the independent-link assumption. A closed-form analysis of dynamic slant geometry identifies a sensing-optimal elevation whose operational value is $15^\circ$ (the P.618-validity floor), yielding a $1.58\times$ geometric improvement over the $38^\circ$ baseline. The weather-adaptive allocation exhibits a three-regime structure: maximum sensing at low rain, graceful degradation at moderate rain, and throughput-optimal operation at outage, and the CUSUM detector provides sub-10-min delay for moderate-to-heavy rain ($R \geq 20\mmh$). Validation with 215 GEO satellite links ($r=0.72$, RMSE$=1.24\dB$) and 113 gauges confirms the underlying attenuation model; since the bounds depend on frequency and slant-path geometry rather than orbital altitude, the analytical results transfer to LEO constellations under matched signal parameters, while dedicated LEO validation remains a target for future work.

Natural extensions include elevation-aware adaptive pilot allocation $\eta^*(R, \theta_{\mathrm{el}})$, Ku+Ka dual-frequency estimation, LEO-specific Starlink validation, MLE/MAP attainability verification, and spatial rain field reconstruction.

\appendices

\section{Proof of Proposition~\ref{prop:ident}}
\label{app:ident}

The sensitivity matrix $\matr{G} \in \mathbb{R}^{K \times 4}$ has columns $\vect{g}_i = [\partial A_{\mathrm{tot}}(f_1)/\partial\theta_i, \ldots, \partial A_{\mathrm{tot}}(f_K)/\partial\theta_i]^T$. Over the 2~GHz Ku-band, the rain and gas gradients are near-proportional: $|\vect{g}_R^T \vect{g}_\rho| / (\|\vect{g}_R\| \|\vect{g}_\rho\|) = 0.97$. The Gram matrix $\matr{G}^T\matr{G}$ inherits this ill-conditioning: with three unknowns, $\kappa = \lambda_{\max}/\lambda_{\min} > 10^4$, and the marginal CRB for $R$ exceeds $670\%$ of the fully-informed value. Adding Ka-band ($K=8$ over 10.7--20.2~GHz) reduces the coherence to 0.82, yielding $\kappa < 10^4$. \hfill $\blacksquare$

\section{Proof of Proposition~\ref{prop:jp}}
\label{app:jp}

The log-normal score function is
\begin{equation}
\frac{\partial\ln p}{\partial R} = -\frac{1}{R}\!\left(1 + \frac{\ln R - \mu_{\ln}}{\sigma_{\ln}^2}\right).
\label{eq:score_app}
\end{equation}
Let $z = (\ln R - \mu_{\ln})/\sigma_{\ln} \sim \mathcal{N}(0,1)$, so $R = e^{\mu_{\ln}+\sigma_{\ln} z}$. Substituting:
\begin{align}
J_P &= \E\!\left[\left(\frac{\partial\ln p}{\partial R}\right)^{\!2}\right] = \E\!\left[e^{-2(\mu_{\ln}+\sigma_{\ln} z)}\left(1+\frac{z}{\sigma_{\ln}}\right)^{\!2}\right] \notag \\
&= e^{-2\mu_{\ln}} \E\!\left[e^{-2\sigma_{\ln} z}\left(1+\frac{2z}{\sigma_{\ln}} + \frac{z^2}{\sigma_{\ln}^2}\right)\right].
\label{eq:jp_expand}
\end{align}
Applying the Gaussian MGF $\E[e^{tz}] = e^{t^2/2}$, $\E[z e^{tz}] = t e^{t^2/2}$, and $\E[z^2 e^{tz}] = (t^2+1)e^{t^2/2}$ with $t=-2\sigma_{\ln}$:
\begin{align}
\E[e^{-2\sigma_{\ln} z}] &= e^{2\sigma_{\ln}^2}, \label{eq:mgf1} \\
\E[z\,e^{-2\sigma_{\ln} z}] &= -2\sigma_{\ln}\,e^{2\sigma_{\ln}^2}, \label{eq:mgf2} \\
\E[z^2 e^{-2\sigma_{\ln} z}] &= (4\sigma_{\ln}^2+1)\,e^{2\sigma_{\ln}^2}. \label{eq:mgf3}
\end{align}
Collecting terms:
\begin{align}
J_P &= e^{-2\mu_{\ln}+2\sigma_{\ln}^2}\!\left[1 - 4 + \frac{4\sigma_{\ln}^2+1}{\sigma_{\ln}^2}\right] = e^{-2\mu_{\ln}+2\sigma_{\ln}^2}\!\left[1+\frac{1}{\sigma_{\ln}^2}\right]. \notag
\end{align}
From $\bar{R} = e^{\mu_{\ln}+\sigma_{\ln}^2/2}$, we have $e^{-2\mu_{\ln}} = e^{\sigma_{\ln}^2}/\bar{R}^2$, giving $e^{-2\mu_{\ln}+2\sigma_{\ln}^2} = e^{3\sigma_{\ln}^2}/\bar{R}^2$:
\begin{equation}
J_P = \frac{1+1/\sigma_{\ln}^2}{\bar{R}^2}\,e^{3\sigma_{\ln}^2}. \qquad \blacksquare
\label{eq:jp_qed}
\end{equation}

\section{Proof of Corollary~\ref{cor:tbcrb}}
\label{app:tbcrb}

Under the Gauss--Markov model~\eqref{eq:gm}, the joint prior for $T$ log-rain-rate snapshots $\vect{z} = [\ln R(1), \ldots, \ln R(T)]^T$ has covariance $[\matr{C}]_{ij} = \sigma_{\ln}^2 \rho^{|i-j|}$. In the constant-$R$ approximation ($R(t) \approx R$ for all $t$), the observation $\hat{\vect{a}}(t)$ at time $t$ depends on $R(t)$, which is related to $R(1)$ via the Gauss--Markov chain. Under the approximation $R(t) \approx R(1)$, the effective sensitivity $\partial A_k/\partial R(1)$ at lag $\tau = t-1$ is attenuated by the correlation $\rho^{\tau}$, so the per-snapshot FIM contribution scales as $\rho^{2\tau}$:
\begin{equation}
J_D^{(t)}(R) = \rho^{2(t-1)} J_D(R).
\label{eq:jd_lag}
\end{equation}
This follows from the chain rule $\partial A_k/\partial R(1) = (\partial A_k/\partial R(t)) \cdot (\partial R(t)/\partial R(1))$. Since the Gauss--Markov model~\eqref{eq:gm} acts on $\ln R$, the linearized sensitivity is $\partial \ln R(t)/\partial \ln R(1) = \rho^{t-1}$, giving $\partial R(t)/\partial R(1) \approx \rho^{t-1}$ when $R(t) \approx R(1)$, and $J_D \propto (\partial A_k/\partial R)^2$ from~\eqref{eq:crb}. The total data Fisher information over $T$ snapshots is
\begin{equation}
J_D^{\mathrm{tot}} = \sum_{\tau=0}^{T-1} \rho^{2\tau} J_D(R) = J_D(R) \cdot \frac{1-\rho^{2T}}{1-\rho^2} = G_T \cdot J_D(R).
\label{eq:jd_total}
\end{equation}
The Van Trees inequality gives $\bcrb_T = (J_D^{\mathrm{tot}} + J_P)^{-1} = (G_T J_D + J_P)^{-1}$. \hfill $\blacksquare$

\section{Proof of Proposition~\ref{prop:mono}}
\label{app:mono}

\emph{Regime 1 ($R < R_{\mathrm{sat}}$):} $C(\eta_{\max}, \bar{\gamma}(R)) \geq C_{\min}$ since $\bar{\gamma}$ is sufficiently large. The constraint is inactive and $\eta^* = \eta_{\max}$.

\emph{Regime 2 ($R_{\mathrm{sat}} \leq R \leq R_{\mathrm{out}}$):} The constraint $C(\eta^*, \bar{\gamma}(R)) = C_{\min}$ is active. By implicit differentiation:
\begin{equation}
\frac{d\eta^*}{dR} = -\frac{\partial C/\partial\bar{\gamma} \cdot d\bar{\gamma}/dR}{\partial C/\partial\eta}.
\end{equation}
For $\eta > \eta^*_{\mathrm{rate}}$, $\partial C/\partial\eta < 0$ (pilot overhead exceeds CSI benefit). Since $\partial C/\partial\bar{\gamma} > 0$ and $d\bar{\gamma}/dR < 0$, the numerator is positive. Thus $d\eta^*/dR = (+)/(-) < 0$: $\eta^*$ decreases as $R$ increases.

\emph{Regime 3 ($R > R_{\mathrm{out}}$):} $C(\eta^*_{\mathrm{rate}}, \bar{\gamma}(R)) < C_{\min}$ and the constraint is infeasible for $\eta > \eta^*_{\mathrm{rate}}$. The system maximizes throughput at $\eta^* = \eta^*_{\mathrm{rate}}(R)$. \hfill $\blacksquare$

\section{Proof of Proposition~\ref{prop:asymp}}
\label{app:asymp}

For $\bar{\gamma} \gg 1$, $\bar{\gamma}^2\eta N/(1+\bar{\gamma}\eta N) \approx \bar{\gamma}$, so $C(\eta) \approx (1-\eta)\log_2(1+\bar{\gamma})$. Setting $C = C_{\min}$ gives
\begin{equation}
\eta^* = 1 - \frac{C_{\min}}{\log_2(1+\bar{\gamma})}. \qquad \blacksquare
\end{equation}

\section{Proof of Proposition~\ref{prop:delay}}
\label{app:delay}

Under $H_1$ at rain rate $R$, the CUSUM increment $\Lambda(t) = A(t) - \mu_d/2$ has positive drift
\begin{equation}
\E[\Lambda(t)|R] = \mu_R - \frac{\mu_d}{2},
\end{equation}
where $\mu_R = k R^\alpha L_{\mathrm{eff}}$ denotes the mean attenuation at rain rate $R$. By Wald's identity~\cite{lorden1971,moustakides1986}, the average number of samples for the cumulative sum to cross the threshold $h$ is approximately $h/\E[\Lambda|R]$. Substituting $h = \sigma_n^2\ln(1/P_{\mathrm{FA}})/\mu_d$ from Section~\ref{ssec:cusum}:
\begin{equation}
\mathrm{ADD}(R) \approx \frac{h}{\mu_R - \mu_d/2} = \frac{\sigma_n^2\ln(1/P_{\mathrm{FA}})}{\mu_d(\mu_R - \mu_d/2)}.
\end{equation}
Since $\mu_R$ is strictly increasing in $R$ (for $\alpha > 0$), the ADD is strictly decreasing: heavier rain produces larger drift, accelerating detection. For the design-matched case $R = R_d$ ($\mu_R = \mu_d$), this reduces to $\mathrm{ADD}(R_d) = 2\sigma_n^2\ln(1/P_{\mathrm{FA}})/\mu_d^2$. \hfill $\blacksquare$

\section{Proof of Lemma~\ref{lem:sweet}}
\label{app:sweet}

Define $u \triangleq \sin\theta_{\mathrm{el}}$ and $u_b \triangleq \sin\theta_{\mathrm{base}}$. Under the simplified model with $L_{\mathrm{eff}}(u) = L_0 u_b/u$ and clear-sky SNR $\beta(u) = \gamma_0 (u/u_b)^2$, substituting into~\eqref{eq:rmin} and taking the logarithm yields
\begin{equation}
2\bar{\alpha}\,\ln R_{\min}(u) = \ln\sigma_n^2(u) + 2\ln u + \mathrm{const},
\label{eq:logrmin}
\end{equation}
where the constant absorbs all terms independent of $u$. The first-order optimality condition $d/du\,[2\bar{\alpha}\ln R_{\min}] = 0$ reduces to
\begin{equation}
\frac{1}{\sigma_n^2(u)}\frac{d\sigma_n^2}{du} = -\frac{2}{u}.
\label{eq:foc}
\end{equation}
With the noise model $\sigma_n^2(u) = \sigma_{\mathrm{sys}}^2 + (c_0/N_p)(1+1/\beta(u))^2$ and $d\beta/du = 2\beta/u$, the chain rule gives
\begin{equation}
\frac{d\sigma_n^2}{du} = -\frac{4 c_0}{N_p}\cdot\frac{1+1/\beta}{\beta\,u}.
\label{eq:dsig}
\end{equation}
Substituting~\eqref{eq:dsig} into~\eqref{eq:foc} and cancelling $u$ yields
\begin{equation}
\sigma_n^2(u^*) = \frac{2 c_0}{N_p}\cdot\frac{1+1/\beta^*}{\beta^*}.
\label{eq:sweetimplicit}
\end{equation}
Letting $x \triangleq 1 + 1/\beta^*$ and substituting the noise model into~\eqref{eq:sweetimplicit}, the implicit equation collapses to the quadratic $x(x-2) = N_p\sigma_{\mathrm{sys}}^2/c_0$, whose positive root is
\begin{equation}
x^* = 1 + \sqrt{1 + N_p\sigma_{\mathrm{sys}}^2/c_0}.
\label{eq:xstar}
\end{equation}
The sensing-optimal elevation follows from $\beta^* = (x^*-1)^{-1}$ and $u^* = u_b\sqrt{\beta^*/\gamma_0}$. \hfill$\blacksquare$

\bibliographystyle{IEEEtran}
\bibliography{references}

@misc{itu838,
  author = {{ITU-R}},
  title  = {Recommendation {ITU-R} {P}.838-3: {S}pecific attenuation model
            for rain for use in prediction methods},
  year   = {2005},
  note   = {International Telecommunication Union, Geneva, Switzerland}
}

@misc{itu676,
  author = {{ITU-R}},
  title  = {Recommendation {ITU-R} {P}.676-13: {A}ttenuation by atmospheric
            gases and related effects},
  year   = {2022},
  note   = {International Telecommunication Union, Geneva, Switzerland}
}

@misc{itu840,
  author = {{ITU-R}},
  title  = {Recommendation {ITU-R} {P}.840-9: {A}ttenuation due to clouds
            and fog},
  year   = {2023},
  note   = {International Telecommunication Union, Geneva, Switzerland}
}

@misc{itu618,
  author = {{ITU-R}},
  title  = {Recommendation {ITU-R} {P}.618-14: {P}ropagation data and
            prediction methods required for the design of {E}arth-space
            telecommunication systems},
  year   = {2023},
  note   = {International Telecommunication Union, Geneva, Switzerland}
}

@misc{itu839,
  author = {{ITU-R}},
  title  = {Recommendation {ITU-R} {P}.839-4: {R}ain height model for
            prediction methods},
  year   = {2013},
  note   = {International Telecommunication Union, Geneva, Switzerland}
}

@article{overeem2013,
  author  = {A. Overeem and H. Leijnse and R. Uijlenhoet},
  title   = {Country-wide rainfall maps from cellular communication networks},
  journal = {Proc. Nat. Acad. Sci. USA},
  volume  = {110},
  number  = {8},
  pages   = {2741--2745},
  year    = {2013},
  month   = feb,
  doi     = {10.1073/pnas.1217961110}
}

@article{fencl2024,
  title={Data formats and standards for opportunistic rainfall sensors},
  author={Fencl, Martin and Nebuloni, Roberto and Andersson, Jafet CM and Bares, Vojtech and Blettner, Nico and Cazzaniga, Greta and Chwala, Christian and Colli, Matteo and de Vos, Lotte and El Hachem, Abbas and others},
  journal={Open Research Europe},
  volume={3},
  pages={169},
  year={2024}
}

@article{messer2006,
  author  = {H. Messer and A. Zinevich and P. Alpert},
  title   = {Environmental monitoring by wireless communication networks},
  journal = {Science},
  volume  = {312},
  number  = {5774},
  pages   = {713},
  year    = {2006},
  month   = may,
  doi     = {10.1126/science.1120034}
}

@article{barthes2013,
  author  = {L. Barth\`{e}s and C. Mallet},
  title   = {Rainfall measurement from the opportunistic use of an
             {E}arth--space link in the {Ku} band},
  journal = {Atmos. Meas. Tech.},
  volume  = {6},
  number  = {8},
  pages   = {2181--2193},
  year    = {2013},
  doi     = {10.5194/amt-6-2181-2013}
}

@article{gelbart2025,
  author  = {L. Gelbart and L. Barth\`{e}s and F. Mercier-Tigrine
             and A. Chazottes and C. Mallet},
  title   = {Enhanced quantitative precipitation estimation through the
             opportunistic use of {Ku} {TV-SAT} links via a dual-channel
             procedure},
  journal = {Atmos. Meas. Tech.},
  volume  = {18},
  number  = {2},
  pages   = {351--370},
  year    = {2025},
  doi     = {10.5194/amt-18-351-2025}
}

@article{giannetti2021,
  author  = {F. Giannetti and R. Reggiannini},
  title   = {Opportunistic rain rate estimation from measurements of
             satellite downlink attenuation: {A} survey},
  journal = {Sensors},
  volume  = {21},
  number  = {17},
  pages   = {5872},
  year    = {2021},
  month   = aug,
  doi     = {10.3390/s21175872}
}

@article{nebuloni2025,
  author  = {R. Nebuloni and M. Graf and G. Cazzaniga
             and F. Mercier and M. Turko},
  title   = {The {OpenSat4Weather} dataset: {Ku}-band satellite link data
             for precipitation monitoring},
  journal = {Earth Syst. Sci. Data Discuss.},
  year    = {2025},
  pages   = {1--23},
  note    = {Preprint, in review},
  doi     = {10.5194/essd-2025-537}
}

@article{liu2022jsac,
  author  = {F. Liu and Y. Cui and C. Masouros and J. Xu
             and T. X. Han and Y. C. Eldar and S. Buzzi},
  title   = {Integrated sensing and communications: {T}oward dual-functional
             wireless networks for {6G} and beyond},
  journal = {IEEE J. Sel. Areas Commun.},
  volume  = {40},
  number  = {6},
  pages   = {1728--1767},
  year    = {2022},
  month   = jun,
  doi     = {10.1109/JSAC.2022.3156632}
}

@article{palhares2026,
  author        = {V. Palhares and A. Grudnitsky and S. Mandelli},
  title         = {Weather estimation for integrated sensing and communication},
  journal       = {arXiv preprint arXiv:2601.15145
        
        
        
        
        
        
        
        },
  year          = {2026},
  eprint        = {2601.15145},
  archiveprefix = {arXiv},
  primaryclass  = {eess.SP}
}

@article{weiss2024,
  author  = {T. Weiss and T. Routtenberg and J. Ostrometzky and H. Messer},
  title   = {Intensity estimation after detection for accumulated rainfall
             estimation},
  journal = {Frontiers Signal Process.},
  volume  = {4},
  pages   = {1291878},
  year    = {2024},
  doi     = {10.3389/frsip.2024.1291878}
}

@article{xiong2023,
  author  = {Y. Xiong and F. Liu and Y. Cui and W. Yuan
             and T. X. Han and G. Caire},
  title   = {On the fundamental tradeoff of integrated sensing and
             communications under {Gaussian} channels},
  journal = {IEEE Trans. Inf. Theory},
  volume  = {69},
  number  = {9},
  pages   = {5723--5751},
  year    = {2023},
  month   = sep,
  doi     = {10.1109/TIT.2023.3284449},
  note    = {2025 IEEE ComSoc-IT Joint Paper Award}
}

@article{dong2025debrisense,
  author  = {H. Dong and O. B. Akan},
  title   = {{DebriSense}: {THz}-Based Integrated Sensing and Communications ({ISAC})
             for Debris Detection and Classification in the {Internet of Space} ({IoS})},
  journal = {IEEE Trans. Wireless Commun.},
  volume  = {24},
  number  = {11},
  pages   = {9282--9295},
  year    = {2025},
  month   = nov,
  doi     = {10.1109/TWC.2025.3572276}
}

@inproceedings{sagiv2023,
  author    = {S. Sagiv and H. Messer},
  title     = {A {Cram\'{e}r--Rao} based study of {2-D} fields retrieval
               by measurements from a random sensor network},
  booktitle = {Proc. IEEE Int. Conf. Acoust., Speech, Signal Process.
               Workshops (ICASSPW)},
  year      = {2023},
  month     = jun,
  doi       = {10.1109/ICASSPW59220.2023.10193063}
}

@book{vantrees1968,
  author    = {H. L. {Van Trees}},
  title     = {Detection, Estimation, and Modulation Theory, {Part I}},
  publisher = {John Wiley \& Sons},
  address   = {New York, NY, USA},
  year      = {1968}
}

@article{gini1998,
  author  = {F. Gini},
  title   = {A radar application of a modified {Cram\'{e}r--Rao} bound:
             Parameter estimation in non-{Gaussian} clutter},
  journal = {IEEE Trans. Signal Process.},
  volume  = {46},
  number  = {7},
  pages   = {1945--1953},
  year    = {1998},
  month   = jul,
  doi     = {10.1109/78.700966
        
        
}}

@inproceedings{ghaddar2025active,
  title={Active Uplink Sensing Beamformer Design via Bayesian Cram{\'e}r-Rao Bound Dual Optimization},
  author={Ghaddar, Nadim and Yu, Wei},
  booktitle={ICC 2025-IEEE International Conference on Communications},
  pages={5736--5741},
  year={2025},
  organization={IEEE}
}

@article{kedem1987,
  author  = {B. Kedem and L. S. Chiu},
  title   = {On the lognormality of rain rate},
  journal = {Proc. Nat. Acad. Sci. USA},
  volume  = {84},
  number  = {4},
  pages   = {901--905},
  year    = {1987},
  month   = feb,
  doi     = {10.1073/pnas.84.4.901}
}

@article{chwala2019,
  author  = {Chwala, Christian and Kunstmann, Harald},
  title   = {Commercial microwave link networks for rainfall observation:
             {Assessment} of the current status and future challenges},
  journal = {WIREs Water},
  year    = {2019},
  volume  = {6},
  number  = {2},
  pages   = {e1337},
  doi     = {10.1002/wat2.1337}
}

@article{graf2020,
  author  = {Graf, Maximilian and Chwala, Christian and Polz, Julius
             and Kunstmann, Harald},
  title   = {Rainfall estimation from a {German}-wide commercial microwave
             link network: optimized processing and validation for 1 year
             of data},
  journal = {Hydrol. Earth Syst. Sci.},
  year    = {2020},
  volume  = {24},
  pages   = {2931--2950},
  doi     = {10.5194/hess-24-2931-2020}
}

@article{gill1995,
  author  = {Gill, Richard D. and Levit, Boris Y.},
  title   = {Applications of the van {Trees} inequality: A {Bayesian}
             {Cram\'{e}r--Rao} bound},
  journal = {Bernoulli},
  year    = {1995},
  volume  = {1},
  number  = {1--2},
  pages   = {59--79}
}

@article{tichavsky1998,
  author  = {Tichavsk\'{y}, Petr and Muravchik, Carlos H. and Nehorai, Arye},
  title   = {Posterior {Cram\'{e}r--Rao} bounds for discrete-time nonlinear
             filtering},
  journal = {IEEE Trans. Signal Process.},
  year    = {1998},
  volume  = {46},
  number  = {5},
  pages   = {1386--1396},
  month   = may,
  doi     = {10.1109/78.668800}
}

@article{aharon2024,
  author  = {Aharon, Ori and Tabrikian, Joseph},
  title   = {Asymptotically Tight {Bayesian} {Cram\'{e}r--Rao} Bound},
  journal = {IEEE Trans. Signal Process.},
  year    = {2024},
  volume  = {72},
  pages   = {3333--3346},
  doi     = {10.1109/TSP.2024.3421900}
}

@article{li2023leo,
  author  = {Li, Ke-Xin and Gao, Xiqi and Xia, Xiang-Gen},
  title   = {Channel Estimation for {LEO} Satellite Massive {MIMO} {OFDM}
             Communications},
  journal = {IEEE Trans. Wireless Commun.},
  year    = {2023},
  volume  = {22},
  number  = {11},
  pages   = {7537--7550},
  month   = nov,
  doi     = {10.1109/TWC.2023.3252895}
}

@article{moustakides1986,
  author  = {Moustakides, George V.},
  title   = {Optimal Stopping Times for Detecting Changes in Distributions},
  journal = {Ann. Statist.},
  year    = {1986},
  volume  = {14},
  number  = {4},
  pages   = {1379--1387},
  month   = dec,
  doi     = {10.1214/aos/1176350164}
}

@article{leyva2025,
  author={Leyva-Mayorga, Israel and Saggese, Fabio and Li, Lintao and Popovski, Petar},
  journal={IEEE Transactions on Wireless Communications}, 
  title={Integrating Atmospheric Sensing and Communications for Resource Allocation in NTNs}, 
  year={2025},
  volume={24},
  number={11},
  pages={9703-9718},
  doi={10.1109/TWC.2025.3574760}}

@ARTICLE{han2023cml5g,

  author={Han, Congzheng and Zhang, Gaoyuan and Ji, Baofeng and Huo, Juan and Messer, Hagit},

  journal={IEEE Communications Magazine}, 

  title={On the Potential of Using Emerging Microwave Links for City Rainfall Monitoring}, 

  year={2023},

  volume={61},

  number={11},

  pages={174-180},

  doi={10.1109/MCOM.001.2200975}}

@article{xu2024jsac,
  author  = {Xu, Chan and Zhang, Shuowen},
  title   = {{MIMO} Integrated Sensing and Communication Exploiting
             Prior Information},
  journal = {IEEE J. Sel. Areas Commun.},
  year    = {2024},
  volume  = {42},
  number  = {9},
  pages   = {2306--2321},
  month   = sep,
  doi     = {10.1109/JSAC.2024.3413972}
}

@article{hassibi2003,
  author  = {B. Hassibi and B. M. Hochwald},
  title   = {How much training is needed in multiple-antenna wireless links?},
  journal = {IEEE Trans. Inf. Theory},
  volume  = {49},
  number  = {4},
  pages   = {951--963},
  year    = {2003},
  month   = apr,
  doi     = {10.1109/TIT.2003.809594}
}

@article{humphreys2023,
  author  = {T. E. Humphreys and P. A. Iannucci
             and Z. M. Komodromos and A. M. Graff},
  title   = {Signal structure of the {Starlink} {Ku}-band downlink},
  journal = {IEEE Trans. Aerosp. Electron. Syst.},
  volume  = {59},
  number  = {5},
  pages   = {6016--6030},
  year    = {2023},
  month   = oct,
  doi     = {10.1109/TAES.2023.3268610
        
        
}
}

@article{page1954,
  author  = {E. S. Page},
  title   = {Continuous inspection schemes},
  journal = {Biometrika},
  volume  = {41},
  number  = {1--2},
  pages   = {100--115},
  year    = {1954},
  month   = jun,
  doi     = {10.1093/biomet/41.1-2.100
        
        
}}

@article{lorden1971,
  author  = {G. Lorden},
  title   = {Procedures for reacting to a change in distribution},
  journal = {Ann. Math. Statist.},
  volume  = {42},
  number  = {6},
  pages   = {1897--1908},
  year    = {1971},
  month   = dec,
  doi     = {10.1214/aoms/1177693055
        
        
}}

@unpublished{dong2026eumc,
  author = {H. Dong and O. B. Akan},
  title  = {{Ku}-band satellite signals for rain rate estimation:
            Performance limits and 215-link validation},
  note   = {Submitted to Eur. Microw. Conf. (EuMC) 2026, London, UK},
  year   = {2026}
}

@article{wang2025e2l,
  author        = {H. Wang and H. Dong and H. Cai and O. B. Akan},
  title         = {Environment-to-Link {ISAC} with Space-Weather Sensing for {Ka}-Band {LEO} Downlinks},
  journal       = {arXiv preprint arXiv:2601.00820},
  year          = {2026},
  eprint        = {2601.00820},
  archiveprefix = {arXiv},
  primaryclass  = {eess.SP}
}



\end{document}